\documentclass[12pt,draftclsnofoot,onecolumn]{IEEEtran}
\usepackage[latin9]{inputenc}
\usepackage{amsmath}
\usepackage{amssymb}
\usepackage{graphicx}
\usepackage{esint}

\makeatletter

\IEEEoverridecommandlockouts
\usepackage{cite}

\newtheorem{proposition}{Proposition}
\newtheorem{corollary}{Corollary}

\newtheorem{lemma}{Lemma}

\allowdisplaybreaks[4]

\makeatother

\begin{document}

\title{Secure Communication Via an Untrusted Non-Regenerative Relay in
  Fading Channels 
\thanks{The work was supported by the U.S. Army
    Research Office MURI grant W911NF-07-1-0318, and by the National
    Science Foundation under grant CCF-1117983.}  }

\author{
\IEEEauthorblockN{Jing~Huang, Amitav~Mukherjee, and
  A.~Lee~Swindlehurst} \\
\IEEEauthorblockA{\normalsize Electrical Engineering and Computer Science \\
University of California, Irvine, CA 92697 \\
Email: \{jing.huang; amukherj; swindle\}@uci.edu }
}

\maketitle
%\begin{titlepage}

\begin{abstract}
  We investigate a relay network where the source can potentially utilize an untrusted non-regenerative relay to augment its direct transmission of a confidential message to the destination. Since the relay is untrusted, it is desirable to protect the confidential data from it while simultaneously making use of it to increase the reliability of the transmission.  We first examine the secrecy outage probability (SOP) of the network assuming a single antenna relay, and calculate the exact SOP for three different schemes: direct transmission without using the relay, conventional non-regenerative relaying, and cooperative jamming by the destination.  Subsequently, we conduct an asymptotic analysis of the SOPs to determine the optimal policies in different operating regimes. We then generalize to the multi-antenna relay case and investigate the impact of the number of relay antennas on the secrecy performance. Finally, we study a scenario where the relay has only a single RF chain which necessitates an antenna selection scheme, and we show that unlike the case where all antennas are used, under certain conditions the cooperative jamming scheme with antenna selection provides a diversity advantage for the receiver. Numerical results are presented to verify the theoretical predictions of the preferred transmission policies.
\end{abstract}
\begin{IEEEkeywords}
Wiretap channel, physical layer security, outage probability, relay
networks, cooperative jamming.
\end{IEEEkeywords}

%\end{titlepage}

\section{Introduction}

The broadcast characteristic of the wireless medium facilitates a
number of advanced communication protocols at the physical layer,
such as cooperative communications with the aid of relay nodes. Since
relays can overhear signals emanating from a source and rebroadcast
them towards the intended destination, the reliability of the transmission
can be improved via diversity. However, the broadcast property also
makes it difficult to shield information from being leaked to unintended
receivers (eavesdroppers), which has led to an intensive study of
improving information security at the physical layer of wireless networks
\cite{Liang_Information08}.

The foremost metric of physical-layer information security is the
secrecy capacity, which quantifies the maximal transmission rate at
which the eavesdropper is unable to decode any of the confidential
data. An alternative secrecy criterion that has recently been investigated
for fading channels is the secrecy outage probability (SOP), from
which one can determine the likelihood of achieving a certain
secrecy rate \cite{Bloch_Wireless08}.

In the context of single-input single-output (SISO) relay channels,
the SOP has been investigated in \cite{Krikidis_Relay09,Krikidis_Opportunistic10,Ding_Opportunistic11}
for networks composed of external eavesdroppers that are distinct
from the source/sink and relay nodes.  Secrecy may still be an issue
even in the absence of external eavesdroppers, since one may desire
to keep the source signal confidential from the relay itself in spite
of its assistance in forwarding
the data to the destination \cite{Oohama_Coding01}. The relay is
in effect also an eavesdropper, even though it complies with the source's
request to forward messages to the destination. For example, an \emph{untrusted
relay} may belong to a heterogeneous network without the same security
clearance as the source and destination nodes. This scenario has been
studied in \cite{He_Two-Hop09,He_Cooperation10} where the authors
presented bounds on the achievable secrecy rate. Furthermore, they
showed that non-regenerative or amplify-and-forward (AF) and compress-and-forward
relaying (including a direct link) admit a non-zero secrecy rate even
when the relay is untrusted, which does not hold for decode-and-forward
relaying.

Our paper analyzes a three-node relay network where the source
can potentially utilize a multi-antenna untrusted relay to augment
the direct link to its destination. In \cite{Jeong_Joint12}, the
authors considered the joint source/relay beamforming design problem
for secrecy rate maximization via an AF multi-antenna relay. In realistic
fading channels, the secrecy outage probability is a more meaningful
metric compared to the ergodic secrecy rate, which is ill-defined under
finite delay constraints. Thus, unlike \cite{Jeong_Joint12}, in this
work we focus on the SOP of the AF relaying protocol, which is chosen
due to its increased security vis-à-vis decode-and-forward and lower
complexity compared to compress-and-forward.

When multiple antennas are employed in relay networks, any potential
performance benefits must be balanced against increased hardware complexity
and power consumption. As a reduced-complexity solution that can maintain
full diversity, antenna selection has received extensive attention
in AF relay networks, for example in cases where only one RF chain
is available at the relay \cite{Amarasuriya_Feedback10}. In \cite{Zhang_Near-Optimal10},
a low-complexity near-optimal antenna selection algorithm was proposed
for maximizing the achievable rate. The bit error rate performance
obtained by choosing the best antenna pairs over both relay hops was
examined in \cite{Kim_End06}. However, the open problem addressed
in our paper is the tradeoff between the diversity gain for the legitimate
receiver versus the inadvertent diversity gain of the information
leaked to the untrusted relay in the first hop.

In this paper, we calculate the exact SOP with a multi-antenna
relay for three different transmission policies: (1) direct
transmission (DT) where the relay is considered as a pure
eavesdropper, (2) conventional AF relaying, and (3) cooperative
jamming (CJ) by the destination in the first hop to selectively
degrade the relay's eavesdropping capability.  There are
scenarios (with different SNR, number of antennas, channel gains,
\textit{etc.}) where each of these three schemes owns a
performance advantage over the other two. Our analysis allows
these performance transitions to be determined. We also conduct
an asymptotic analysis for the special case of a single-antenna
relay and elicit the optimal policies for different power budgets
and channel gains.  Secrecy is typically compromised with a
multi-antenna relay employed, especially if the number of relay antennas
grow large.  We show that the secrecy performance improves when
the relay can only perform antenna selection instead of
beamforming, but in nearly all cases the SOP still grows with the
number of relay antennas. A non-increasing SOP is shown to only
be obtained when the CJ scheme is used and the second-hop CSI can
be hidden from the relay.

The remainder of this work is organized as follows. The mathematical
models of the AF relaying and cooperative jamming approaches are introduced
in Section~\ref{sec:pf}. The secrecy outage probabilities of direct
transmission, AF relaying, and cooperative jamming with a multi-antenna
relay are examined in Section \ref{sec:ut}, and the specialization
to a relay employing antenna selection is presented in Section~\ref{sec:secrecy-with-antenna}.
Selected numerical results are shown in Section~\ref{sec:nr}, and
we conclude in Section~\ref{sec:con}.

\section{Mathematical Model}

\label{sec:pf}

We consider a half-duplex two-hop relaying network composed of a source
(Alice), a destination (Bob), and an untrusted relay that when active
employs the AF protocol. Alice and Bob are both single-antenna nodes,
and the relay is assumed to be equipped with $K$ antennas. The channel
is assumed to be quasi-static (constant during the two hops) with
Rayleigh fading and a direct link between Alice and Bob is assumed
to be available. We also assume all nodes in the network have the
same power budget $P$.

\subsection{Relay Protocol}

For AF relaying, during the first phase the relay receives
\begin{eqnarray}
\mathbf{y}_{R}=\mathbf{h}_{A,R}x_{A}+\mathbf{n}_{R}\label{eq:1}
\end{eqnarray}
where $x_{A}$ is the zero-mean signal transmitted by Alice with variance
$\mathbb{E}\{x_{A}^{H}x_{A}\}=P$, $\mathbf{h}_{A,R}=[h_{A,1},h_{A,2},\dots,h_{A,K}]^{T}$
is the complex circularly symmetric Gaussian channel vector with covariance
matrix $\mathbf{C}_{A,R}=diag\{\bar{\gamma}_{A,1},\bar{\gamma}_{A,2},\dots,\bar{\gamma}_{A,K}\}$,
and $\mathbf{n}_{R}$ is additive white Gaussian noise with
covariance $N_{0}\mathbf{I}$.  For the moment, we assume spatially
white noise with no jamming present; the case of cooperative jamming will be discussed separately.  In general, we will use $\mathbf{h}_{i,j}$
to represent the channel vector between node $i$ and $j$, with $i,j\in\{A,B,R\}$
denoting which of the terminals is involved. Let $\gamma_{i,j}\triangleq|h_{i,j}|^{2}$
be the instantaneous squared channel strength, so that $\gamma_{i,j}$
is exponentially distributed with hazard rate $\frac{1}{\bar{\gamma}_{i,j}}$,
\textit{i.e.} $\gamma_{i,j}\sim\exp\left(\frac{1}{\bar{\gamma}_{i,j}}\right)$.
The probability density function (p.d.f.) of $\gamma_{i,j}$ is given
by
\begin{equation}
p_{\gamma_{i,j}}(x)=\frac{1}{\bar{\gamma}_{i,j}}~e^{-\frac{x}{\bar{\gamma}_{i,j}}},\quad x\ge0.\label{eq:2}
\end{equation}

When the relay has $K$ RF chains and can implement beamforming, we
assume that, for purposes of forwarding the message, the relay adopts maximum ratio combining (MRC) on receive and maximum ratio transmission (MRT) on transmit.  Thus, the output of the relay receiver is given by
\[
\tilde{y}_{R}=\sqrt{\sum\nolimits _{i=1}^{K}|h_{A,i}|^{2}}x_{A}+\frac{\sum_{i=1}^{K}h_{A,i}^{H}n_{i}}{\sqrt{\sum_{i=1}^{K}|h_{A,i}|^{2}}}.
\]
The relay then transmits the signal $\mathbf{x}_{R}=\frac{\sqrt{P}}{\sigma}\frac{\mathbf{h}_{R,B}^{H}}{||\mathbf{h}_{R,B}||}\tilde{y}_{R}$,
where $\sigma=\sqrt{\mathbb{E}\{||\tilde{y}_{R}||^{2}\}}$ and $\mathbf{h}_{R,B}=[h_{1,B},h_{2,B},\dots,h_{K,B}]$.
The received signal at Bob over both phases is then given by
\begin{equation}
\mathbf{y}_{B}=\left[\begin{array}{c}
h_{A,B}\\
\frac{\sqrt{P}}{\sigma}\sqrt{\sum_{i=1}^{K}|h_{i,B}|^{2}}\sqrt{\sum_{i=1}^{K}|h_{A,i}|^{2}}
\end{array}\right]x_{A}+\left[\begin{array}{c}
n_{B1}\\
\frac{\sqrt{P}}{\sigma}\sqrt{\frac{\sum_{i=1}^{K}|h_{i,B}|^{2}}{\sum_{i=1}^{K}|h_{A,i}|^{2}}}\sum_{i=1}^{K}h_{A,i}^{H}n_{i}+n_{B2}
\end{array}\right],\label{eq:4}
\end{equation}
where we assume that $n_{B1}$ and $n_{B2}$ are uncorrelated Gaussian
noise with variance $N_{0}$. The subscripts 1 and 2 refer to the
first and second transmission phases, respectively. Since the antennas
on the relay are much closer together compared to their distances
to the source and the destination, we assume $\{\bar{\gamma}_{A,i}\}_{i=1}^{K}=\bar{\gamma}_{A,R}$
and $\{\bar{\gamma}_{i,B}\}_{i=1}^{K}=\bar{\gamma}_{R,B}$.

We assume Alice uses a codebook
$\mathcal{C}(2^{nR_{0}},2^{nR_{s}},n)$ where $R_{s}$ is the
intended secrecy rate ($R_{s}\le R_{0}$), $n$ is the codeword
length, $2^{nR_{0}}$ is the size of the codebook, and
$2^{nR_{s}}$ is the the number of confidential messages to
transmit.  The $2^{nR_{0}}$ codewords are randomly grouped into
$2^{nR_{s}}$ bins. To send confidential message
$w\in\{1,\cdots,2^{nR_{s}}\}$, Alice will use a stochastic
encoder to randomly select a codeword from bin $w$ and send it
over the channel. Since in our model the untrusted relay only
wiretaps in the first phase, and since the AF and CJ schemes are
mathematically equivalent to a one-stage $1\times2$ SIMO wiretap
channel \cite{Dong_Improving10,He_Cooperation10}, a conventional
wiretap code can be applied.  The achievable secrecy rate is then
the same as for the equivalent one-hop channel: \emph{i.e.,}
$R_{s}=[I_{B}-I_{E}]^{+}$ where $[x]^{+}\triangleq\max\{0,x\}$,
$I_{B}$ is the mutual information for the legitimate link and
$I_{E}$ is the mutual information for the eavesdropper link.

\subsection{Cooperative Jamming}

\label{sec:cooperative-jamming}

Various cooperative jamming schemes involving the transmission of
artificial interference have been proposed in previous work for improving
secrecy \cite{Ding_Opportunistic11,Tekin_General08,Goel_Guaranteeing08,Huang_Cooperative11,Huang_Robust11}.
In this paper, as an alternative to the traditional AF protocol, we
assume a half-duplex cooperative jamming scheme where Bob forfeits
information from Alice during the first phase in favor of transmitting
a jamming signal. Under this model, the received signal at the relay
is
\begin{equation}
\mathbf{y}_{R}=\mathbf{h}_{A,R}x_{A}+\mathbf{h}_{R,B}z_{B}+\mathbf{n}_{R}\label{eq:5}
\end{equation}
where $z_{B}$ is the jamming signal transmitted by Bob with power
$\mathbb{E}\{z_{B}^{H}z_{B}\}=P$. Similar to the AF scheme, during
the second phase, the relay scales $\mathbf{y}_{R}$ and forwards
it to Bob, and thus the received signal at Bob can be written as
\begin{equation}
y_{B}=\frac{\sqrt{P}}{\sigma}\sqrt{\sum_{i=1}^{K}|h_{i,B}|^{2}\sum_{i=1}^{K}|h_{A,i}|^{2}}\, x_{A}+\frac{\sqrt{P}}{\sigma}\sqrt{\frac{\sum_{i=1}^{K}|h_{i,B}|^{2}}{\sum_{i=1}^{K}|h_{A,i}|^{2}}}\left(\sum_{i=1}^{K}h_{i,B}^{H}h_{A,i}z_{B}+\sum_{i=1}^{K}h_{A,i}^{H}n_{i}\right)+n_{B2}\label{eq:6}
\end{equation}
where we assume a reciprocal channel between the relay and Bob: $\mathbf{h}_{R,B}=\mathbf{h}_{B,R}^{H}$.
Note that the intentional interference term can be removed by Bob
since $z_{B}$ is known to him.

While we assume that the relay forwards the output of the MRC beamformer to Bob, when computing the mutual information available to the relay
we assume that she can perform MMSE (maximum SINR) beamforming to
counteract the jamming from Bob.  Recall that since Bob is the source of the interference, he can subtract its contribution from the signal forwarded by the relay regardless of the choice of receive beamformer employed at the relay.

\section{Transmission with an Untrusted Relay}

\label{sec:ut}

\subsection{Single-antenna Relay}

\label{sec:single-antenna-relay}

We begin with the case where the relay employs a single antenna.
In this section, we will calculate the exact SOP expressions for the
DT, AF and CJ schemes and analyze the corresponding asymptotic behavior
under limiting conditions on the power budgets and channel gains.

\subsubsection{DT}

Direct transmission refers to the case where Alice uses a single-hop
transmission to communicate with Bob rather than cooperating with
the relay. As illustrated later, in some cases this strategy provides
better secrecy performance than AF and CJ. Under DT, the relay is simply treated
as a pure eavesdropper. Thus the model will be simplified to a traditional
wiretap channel with Rayleigh fading, which has been fully characterized
in \cite{Bloch_Wireless08}, for example. Since the channel gains
are assumed to be quasi-static, the achievable secrecy rate for one
channel realization is given by
\begin{equation}
R_{s}^{DT}=[I_{B}^{DT}-I_{R}^{DT}]^{+}\label{eq:7}
\end{equation}
where $I_{B}^{DT}$ and $I_{R}^{DT}$ represent the mutual information
between Alice and Bob, and between Alice and the relay respectively,
and are given by $I_{B}^{DT}=\log_{2}(1+\frac{P}{N_{0}}|h_{A,B}|^{2})$
and $I_{R}^{DT}=\log_{2}(1+\frac{P}{N_{0}}|h_{A,R}|^{2})$.

When $K=1$, the probability of a positive secrecy rate is given by
\cite{Bloch_Wireless08}
\begin{align}
\mathcal{P}_{pos}^{DT} & =\mathcal{P}(I_{B}^{DT}-I_{R}^{DT}>0)=\mathcal{P}(|h_{A,B}|^{2}>|h_{A,R}|^{2})\notag\nonumber \\*
 & =\frac{\bar{\gamma}_{A,B}}{\bar{\gamma}_{A,R}+\bar{\gamma}_{A,B}}.\label{eq:8}
\end{align}
It is interesting to note that, in the presence of fading, a nonzero
secrecy rate exists even when $\bar{\gamma}_{A,R}>\bar{\gamma}_{A,B}$,
\textit{i.e.} when the eavesdropper's channel is on average better
than the legitimate channel \cite{Bloch_Wireless08} (although the
probability of such an event is less than 1/2). Eq.~\eqref{eq:8}
also indicates that Alice will be unable to reliably transmit secret
messages when $\bar{\gamma}_{A,R}\rightarrow\infty$, \textit{e.g.},
when the untrusted relay is proximate to Alice.

The outage probability for a target secrecy rate $R$ is given by
\cite{Bloch_Wireless08}
\begin{align}
\mathcal{P}_{out}^{DT}(R) & =\mathcal{P}\{I_{B}^{DT}-I_{R}^{DT}<R\}=\mathcal{P}\left\{ \frac{1+\rho|h_{A,B}|^{2}}{1+\rho|h_{A,R}|^{2}}<2^{R}\right\} \notag\nonumber \\
 & =1-\frac{\bar{\gamma}_{A,B}}{2^{R}\bar{\gamma}_{A,R}+\bar{\gamma}_{A,B}}e^{-\frac{2^{R}-1}{\rho\bar{\gamma}_{A,B}}},\label{eq:9}
\end{align}
where $\rho\triangleq\frac{P}{N_{0}}$ is the transmit SNR. The secrecy
outage probability is a criterion that indicates the fraction of fading
realizations where a secrecy rate $R$ cannot be supported, and also
provides a security metric for the case where Alice and Bob have no CSI for the
eavesdropper \cite{Bloch_Wireless08}. However, we recognize the alternative
definition of SOP that was recently proposed in \cite{Zhou_SOP11},
which provides a more explicit measurement of security level by only
considering secrecy outage events conditioned on a reliable legitimate
link. The secrecy outage results presented in this work can be reformulated
according to this alternative definition in a straightforward manner.

\subsubsection{AF}

When the untrusted AF relay is employed for cooperation, the channel
is equivalent to the conventional wiretap channel where Bob receives
the signal from two orthogonal channels \cite{He_Cooperation10},
and thus the achievable secrecy rate can be computed from $R_{s}^{AF}=\left[I_{B}^{AF}-I_{R}^{AF}\right]^{+}$,
where
\begin{equation}
I_{B}^{AF}=\frac{1}{2}\log_{2}\left(1+\rho|h_{A,B}|^{2}+\rho\frac{|h_{R,B}|^{2}|h_{A,R}|^{2}}{|h_{R,B}|^{2}+\bar{\gamma}_{A,R}+\frac{1}{\rho}}\right)\label{eq:10}
\end{equation}
and
\begin{equation}
I_{R}^{AF}=\frac{1}{2}\log_{2}\left(1+\rho|h_{A,R}|^{2}\right).\label{eq:11}
\end{equation}
Therefore, the probability of achieving a positive secrecy rate for
AF relaying is formulated as
\begin{equation}
\mathcal{P}_{pos}^{AF}=\mathcal{P}\left\{ |h_{A,B}|^{2}+\frac{|h_{R,B}|^{2}|h_{A,R}|^{2}}{|h_{R,B}|^{2}+\bar{\gamma}_{A,R}+\frac{1}{\rho}}>|h_{A,R}|^{2}\right\} .\label{eq:12}
\end{equation}
In Appendix \ref{sec:prob-posit-secr}, we show that this probability
is given by
\begin{equation}
\mathcal{P}_{pos}^{AF}=\mu_{1}(\beta_{1}-1)e^{\mu_{1}\beta_{1}}\mathrm{Ei}(-\mu_{1}\beta_{1})+1\label{eq:13}
\end{equation}
where $\mu_{1}=\frac{\bar{\gamma}_{A,R}+1/\rho}{\bar{\gamma}_{R,B}}$,
$\beta_{1}=1+\frac{\bar{\gamma}_{A,R}}{\bar{\gamma}_{A,B}}$ and $\mathrm{Ei}(\cdot)$
is the exponential integral $\mathrm{Ei}(x)=\int_{-\infty}^{x}e^{t}t^{-1}~dt$.

The outage probability of the AF scheme for a given secrecy rate $R$
can be written as
\begin{equation}
\mathcal{P}_{out}^{AF}(R)=\mathcal{P}\left\{ \frac{1+\rho|h_{A,B}|^{2}+\rho\frac{|h_{R,B}|^{2}|h_{A,R}|^{2}}{|h_{R,B}|^{2}+\bar{\gamma}_{A,R}+\frac{1}{\rho}}}{1+\rho|h_{A,R}|^{2}}<2^{2R}\right\} ,\label{eq:14}
\end{equation}
and the exact SOP is given by the following proposition.

\begin{proposition} \label{sec:af} The secrecy outage probability
for AF relaying can be expressed as
\begin{equation}
\mathcal{P}_{out}^{AF}(R)=1-\frac{\bar{\gamma}_{A,B}}{(2^{2R}-1)\bar{\gamma}_{A,R}+\bar{\gamma}_{A,B}}e^{-\frac{2^{2R}-1}{\rho\bar{\gamma}_{A,B}}}\left[\mu_{1}(\beta_{2}-1)e^{\mu_{1}\beta_{2}}\mathrm{Ei}(-\mu_{1}\beta_{2})+1\right]\label{eq:15}
\end{equation}
where $\mu_{1}=\frac{\bar{\gamma}_{A,R}+1/\rho}{\bar{\gamma}_{R,B}}$,
$\beta_{2}=\frac{2^{2R}\bar{\gamma}_{A,R}+\bar{\gamma}_{A,B}}{(2^{2R}-1)\bar{\gamma}_{A,R}+\bar{\gamma}_{A,B}}$,
and $R$ is the target secrecy rate. \end{proposition}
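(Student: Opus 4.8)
The plan is to treat the outage event in \eqref{eq:14} as a probability over the three independent exponential variables $a\triangleq|h_{A,B}|^{2}$, $b\triangleq|h_{A,R}|^{2}$, and $c\triangleq|h_{R,B}|^{2}$ (for $K=1$), each distributed as in \eqref{eq:2}, and to integrate them out in a carefully chosen order. First I would clear the denominator in \eqref{eq:14} and solve the inequality for $a$, writing the outage event as $a<t(b,c)$ where, after collecting terms and introducing $\delta\triangleq\bar{\gamma}_{A,R}+1/\rho$,
\[
t(b,c)=\frac{2^{2R}-1}{\rho}+b\,\frac{(2^{2R}-1)c+2^{2R}\delta}{c+\delta}.
\]
Since $2^{2R}\ge1$, the threshold $t(b,c)$ is nonnegative for all $b,c\ge0$, so conditioning on $(b,c)$ and integrating the exponential density of $a$ contributes exactly $\mathcal{P}(a<t)=1-e^{-t/\bar{\gamma}_{A,B}}$ with no need to track an indicator function.

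Next I would pull the constant factor $e^{-(2^{2R}-1)/(\rho\bar{\gamma}_{A,B})}$ out of the expectation and integrate over $b$ first. Because the exponent is linear in $b$, the $b$-integral is just the Laplace transform of an exponential and collapses to $\bigl(1+\lambda\bar{\gamma}_{A,R}\bigr)^{-1}$ with $\lambda=\bigl[(2^{2R}-1)c+2^{2R}\delta\bigr]/\bigl[(c+\delta)\bar{\gamma}_{A,B}\bigr]$. The key algebraic step is to simplify $1+\lambda\bar{\gamma}_{A,R}$: its numerator factors as $\bigl[(2^{2R}-1)\bar{\gamma}_{A,R}+\bar{\gamma}_{A,B}\bigr](c+\beta_{2}\delta)$, which is precisely where the definition of $\beta_{2}$ is engineered so that the denominator $D\triangleq(2^{2R}-1)\bar{\gamma}_{A,R}+\bar{\gamma}_{A,B}$ and the shifted variable $c+\beta_{2}\delta$ emerge cleanly. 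The $b$-integration therefore leaves $\tfrac{\bar{\gamma}_{A,B}}{D}\,\tfrac{c+\delta}{c+\beta_{2}\delta}$ inside the remaining $c$-expectation.

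The final step is the $c$-integration. I would split $\tfrac{c+\delta}{c+\beta_{2}\delta}=1+\tfrac{\delta(1-\beta_{2})}{c+\beta_{2}\delta}$ and apply the standard identity $\int_{0}^{\infty}\tfrac{e^{-px}}{x+q}\,dx=-e^{pq}\mathrm{Ei}(-pq)$ with $p=1/\bar{\gamma}_{R,B}$ and $q=\beta_{2}\delta$. Recognizing that $\mu_{1}=\delta/\bar{\gamma}_{R,B}$ turns $pq$ into $\mu_{1}\beta_{2}$ and collapses the bracketed term to $\mu_{1}(\beta_{2}-1)e^{\mu_{1}\beta_{2}}\mathrm{Ei}(-\mu_{1}\beta_{2})+1$, which yields \eqref{eq:15} once the prefactors are reinstated.

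I expect the main obstacle to be twofold: choosing to integrate over $b=|h_{A,R}|^{2}$ before $c=|h_{R,B}|^{2}$ (the reverse order does not reduce to a single $\mathrm{Ei}$ term), and executing the factorization of $1+\lambda\bar{\gamma}_{A,R}$ that exposes $\beta_{2}$. Both steps become routine once the ordering is fixed, and the whole computation parallels the derivation of $\mathcal{P}_{pos}^{AF}$ in \eqref{eq:13} (Appendix~\ref{sec:prob-posit-secr}): setting $R=0$ sends $\beta_{2}\to\beta_{1}$ and $D\to\bar{\gamma}_{A,B}$, recovering $\mathcal{P}_{out}^{AF}(0)=1-\mathcal{P}_{pos}^{AF}$, which serves as a useful consistency check on the final expression.
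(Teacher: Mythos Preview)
Your proposal is correct and follows essentially the same route as the paper's proof in Appendix~\ref{sec:secr-outage-prob}: integrate $|h_{A,B}|^{2}$ first, then $|h_{A,R}|^{2}$, then the relay--Bob variable, and finish with the exponential-integral identity. The only cosmetic difference is that the paper introduces the auxiliary variable $V=\frac{|h_{R,B}|^{2}}{|h_{R,B}|^{2}+\delta}$ (with its Jacobian) and later undoes this via the substitution $x=v/(1-v)$, whereas you work directly with $c=|h_{R,B}|^{2}$; your parametrization is slightly more direct but the underlying computation is identical.
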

\begin{IEEEproof}
See Appendix \ref{sec:secr-outage-prob}.
\end{IEEEproof}
For the high SNR regime, \eqref{eq:14} can be approximated as
\begin{equation}
\mathcal{P}_{out}^{AF}(R)\simeq\mathcal{P}\left(\frac{|h_{A,B}|^{2}+\frac{|h_{R,B}|^{2}|h_{A,R}|^{2}}{|h_{R,B}|^{2}+\bar{\gamma}_{A,R}}}{|h_{A,R}|^{2}}<2^{2R}\right),\label{eq:16}
\end{equation}
which is a function independent of $\rho$. This indicates that the
AF scheme does not approach zero SOP even as the transmit power is
increased. Intuitively, this is reasonable since any increase in the
transmit power will bolster the SNR at both the legitimate user and
the eavesdropper.  The asymptotic value of the SOP at high SNR will be characterized in Section~\ref{sec:asymptotic-behavior}.

\subsubsection{CJ}

As mentioned above, for the CJ approach we assume Bob ignores the
direct link and transmits a jamming signal during the first
phase. According to the signal model in Section \ref{sec:cooperative-jamming},
we have the following expression for the mutual information between
Alice and Bob in this case:
\begin{equation}
I_{B}^{CJ}=\frac{1}{2}\log_{2}\left(1+\rho\frac{|h_{R,B}|^{2}|h_{A,R}|^{2}}{|h_{R,B}|^{2}+\bar{\gamma}_{A,R}+\bar{\gamma}_{R,B}+\frac{1}{\rho}}\right)\label{eq:17}
\end{equation}
\begin{equation}
I_{R}^{CJ}=\frac{1}{2}\log_{2}\left(1+\frac{|h_{A,R}|^{2}}{|h_{R,B}|^{2}+\frac{1}{\rho}}\right),\label{eq:18}
\end{equation}
and the corresponding probability of a positive secrecy rate is given
by
\begin{equation}
\mathcal{P}_{pos}^{CJ}=\mathcal{P}\left\{ \rho\frac{|h_{R,B}|^{2}}{|h_{R,B}|^{2}+\bar{\gamma}_{A,R}+\bar{\gamma}_{R,B}+\frac{1}{\rho}}>\frac{1}{|h_{R,B}|^{2}+\frac{1}{\rho}}\right\} =e^{-\frac{1}{\bar{\gamma}_{R,B}}\sqrt{\frac{\bar{\gamma}_{A,R}+\bar{\gamma}_{R,B}+\frac{1}{\rho}}{\rho}}}.\label{eq:19}
\end{equation}
From \eqref{eq:19}, we see that $\mathcal{P}_{pos}^{CJ}$ is a monotonically
increasing function of $\bar{\gamma}_{R,B}$ when $\bar{\gamma}_{A,R}$
is fixed. In other words, CJ is not appropriate when the second hop
channel is weak. This is not surprising since, when CJ is employed,
the half-duplex constraint for Bob means that the information from
the direct link is ignored, and Bob relies heavily on the second hop
to obtain the information from Alice.

The outage probability in this case can be expressed as
\begin{equation}
\mathcal{P}_{out}^{CJ}(R)=\mathcal{P}\left(\frac{1+\rho\frac{|h_{R,B}|^{2}|h_{A,R}|^{2}}{|h_{R,B}|^{2}+\bar{\gamma}_{A,R}+\bar{\gamma}_{R,B}+\frac{1}{\rho}}}{1+\frac{|h_{A,R}|^{2}}{|h_{R,B}|^{2}+\frac{1}{\rho}}}<2^{2R}\right),\label{eq:20}
\end{equation}
and the exact SOP expression is provided in the following proposition.

\begin{proposition} \label{sec:coop-jamm-cj} The secrecy outage
probability for the CJ scheme is given by
\begin{equation}
\mathcal{P}_{out}^{CJ}(R)=1-\frac{1}{\bar{\gamma}_{R,B}}\int_{t}^{\infty}e^{-\frac{2^{2R}-1}{\bar{\gamma}_{A,R}\phi(z)}-\frac{z}{\bar{\gamma}_{R,B}}}~dz,\label{eq:21}
\end{equation}
where
\begin{align}
 & \phi(z)=\frac{\rho z}{z+\bar{\gamma}_{A,R}+\bar{\gamma}_{R,B}+\frac{1}{\rho}}-\frac{2^{2R}}{z+\frac{1}{\rho}},\label{eq:22}\\
 & t=\frac{(2^{2R}-1)+\sqrt{(2^{2R}-1)^{2}+\rho2^{2R+1}(\bar{\gamma}_{A,R}+\bar{\gamma}_{R,B}+1/\rho)}}{2\rho}.\label{eq:23}
\end{align}
\end{proposition}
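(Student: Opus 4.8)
The plan is to reduce the ratio constraint in \eqref{eq:20} to a single linear inequality in $|h_{A,R}|^2$ whose coefficient depends only on $|h_{R,B}|^2$, and then integrate out the two independent exponential channel gains. Write $X\triangleq|h_{A,R}|^2\sim\exp(1/\bar{\gamma}_{A,R})$ and $Z\triangleq|h_{R,B}|^2\sim\exp(1/\bar{\gamma}_{R,B})$, which are independent. First I would clear the (positive) denominator $1+X/(Z+1/\rho)$ in \eqref{eq:20} and collect the terms multiplying $X$; the outage event then becomes
\[
X\,\phi(Z) < 2^{2R}-1 ,
\]
with $\phi(\cdot)$ exactly the function defined in \eqref{eq:22}. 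This step is routine algebra.

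The crux is determining the sign of $\phi(Z)$, since $X\ge 0$. One checks $\phi(0)=-2^{2R}\rho<0$ and $\phi(z)\to\rho>0$ as $z\to\infty$, and clearing denominators in the inequality $\phi(z)>0$ reduces to the quadratic
\[
\rho z^2 + (1-2^{2R})z - 2^{2R}\left(\bar{\gamma}_{A,R}+\bar{\gamma}_{R,B}+\tfrac{1}{\rho}\right) > 0 .
\]
Because the product of the two roots is negative, exactly one root is positive; that root is the threshold $t$ of \eqref{eq:23}, so $\phi(z)<0$ for $0<z<t$ and $\phi(z)>0$ for $z>t$. A good internal consistency check is to set $R=0$: the positive root should collapse to $\sqrt{(\bar{\gamma}_{A,R}+\bar{\gamma}_{R,B}+1/\rho)/\rho}$, the very threshold already used in deriving $\mathcal{P}_{pos}^{CJ}$ in \eqref{eq:19}.

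With the sign of $\phi$ settled, I would condition on $Z=z$. For $0<z<t$ we have $\phi(z)<0$, so $X\phi(z)\le 0<2^{2R}-1$ holds for every value of $X$ and the conditional outage probability is $1$. For $z>t$ the event is $X<(2^{2R}-1)/\phi(z)$, whose probability under the exponential CDF of $X$ is $1-\exp\!\big(-\tfrac{2^{2R}-1}{\bar{\gamma}_{A,R}\phi(z)}\big)$. Averaging over $Z$ with density $\tfrac{1}{\bar{\gamma}_{R,B}}e^{-z/\bar{\gamma}_{R,B}}$ yields
\[
\mathcal{P}_{out}^{CJ}(R) = \mathcal{P}(Z<t) + \int_t^\infty\left(1 - e^{-\frac{2^{2R}-1}{\bar{\gamma}_{A,R}\phi(z)}}\right)\frac{1}{\bar{\gamma}_{R,B}}e^{-z/\bar{\gamma}_{R,B}}\,dz .
\]
The term $\mathcal{P}(Z<t)=1-e^{-t/\bar{\gamma}_{R,B}}$ combines with the bare tail integral $\int_t^\infty \tfrac{1}{\bar{\gamma}_{R,B}}e^{-z/\bar{\gamma}_{R,B}}\,dz=e^{-t/\bar{\gamma}_{R,B}}=\mathcal{P}(Z>t)$ to give $1$, leaving precisely the expression \eqref{eq:21}.

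The only genuine content is locating the sign change of $\phi$ and confirming it is a single crossing at $t$; I expect that to be the main obstacle, since everything afterward is a one-line conditioning-and-integration argument. Note that no closed form for the residual integral is required — it is reported in integral form in the statement — so the proof terminates once the sign analysis and the cancellation are in place.
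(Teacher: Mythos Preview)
Your proposal is correct and follows essentially the same route as the paper's proof: rewrite the outage event as $X\,\phi(Z)<2^{2R}-1$, split on the sign of $\phi(z)$ at the threshold $t$, use the exponential CDF of $X$ on $\{z>t\}$ and conditional probability $1$ on $\{z<t\}$, then combine the two pieces so that the bare tail terms cancel to $1$. Your quadratic argument for the single sign change is in fact slightly more explicit than the paper's (which simply asserts \eqref{eq:81}--\eqref{eq:82}), and the $R=0$ sanity check against \eqref{eq:19} is a nice addition.
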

\begin{IEEEproof}
See Appendix \ref{sec:deriv-eqref}.
\end{IEEEproof}

It is important to note at this point that we have assumed each
node transmits at its maximum power budget $P$.  It is
straightforward to formulate a secrecy outage minimization
problem subject to various power constraints for the transmission
strategies discussed above, but obtaining closed-form solutions
to these problems appears to be intractable.  Thus our
theoretical analysis will focus on the case where all nodes
transmit with full power, but in the simulations presented later
we will show examples of the performance gain that can be obtained
with an optimal power allocation.

\subsection{Asymptotic Behavior}

\label{sec:asymptotic-behavior} Based on the above analytical expressions,
we see that the choice of which scheme (DT, AF or CJ) to employ depends
on the specific power budgets and channel gains; each of these methods
is optimal for different operating regimes. Next, we investigate the
asymptotic behavior of the outage probability to determine conditions
under which each approach offers the best performance.

\subsubsection{Case of $\rho\rightarrow\infty$}

\label{sec:case-p-rightarrow}

From \eqref{eq:9}, we have
\begin{equation}
\lim_{\rho\rightarrow\infty}\mathcal{P}_{out}^{DT}=1-\frac{\bar{\gamma}_{A,B}}{2^{R}\bar{\gamma}_{A,R}+\bar{\gamma}_{A,B}},\label{eq:24}
\end{equation}
and according to \eqref{eq:15},
\begin{equation}
\lim_{\rho\rightarrow\infty}\mathcal{P}_{out}^{AF}=1-\frac{\bar{\gamma}_{A,B}}{(2^{2R}-1)\bar{\gamma}_{A,R}+\bar{\gamma}_{A,B}}\left[\mu_{1}^{\prime}(\beta_{1}-1)e^{\mu_{1}^{\prime}\beta_{1}}\mathrm{Ei}(-\mu_{1}^{\prime}\beta_{1})+1\right]\label{eq:25}
\end{equation}
where $\mu_{1}^{\prime}=\frac{\bar{\gamma}_{A,R}}{\bar{\gamma}_{R,B}}$.
Therefore, both $\mathcal{P}_{out}^{DT}$ and $\mathcal{P}_{out}^{AF}$
converge to nonzero constants as $\rho\rightarrow\infty$. For CJ,
however, according to \eqref{eq:21}-\eqref{eq:23}, we have
\begin{equation}
\lim_{\rho\rightarrow\infty}\mathcal{P}_{out}^{CJ}=1-\frac{1}{\bar{\gamma}_{R,B}}\int_{0}^{\infty}e^{-\frac{z}{\bar{\gamma}_{R,B}}}~dz=0,\label{eq:26}
\end{equation}
which shows that CJ is preferable for $\rho\rightarrow\infty$, or
high SNR or transmit power scenarios.

\subsubsection{Case of $\bar{\gamma}_{A,B}\rightarrow\infty$ or $0$}

\label{sec:case-barg-right-1}

From \eqref{eq:9} and \eqref{eq:15}, it is also straightforward
to obtain that $\mathcal{P}_{out}^{DT},\mathcal{P}_{out}^{AF}\rightarrow0$
as $\bar{\gamma}_{A,B}\rightarrow\infty$. When $\bar{\gamma}_{A,B}$
is sufficiently large, using the fact that $1-e^{x}=x+O(x^{2})$,
we can observe that with respect to $\bar{\gamma}_{A,B}$, both DT
and AT decay proportionally to $1/\bar{\gamma}_{A,B}$. Conversely,
we also have that $\mathcal{P}_{out}^{DT},\mathcal{P}_{out}^{AF}\rightarrow1$
as $\bar{\gamma}_{A,B}\rightarrow0$. Since $\mathcal{P}_{out}^{CJ}$
does not depend on $\bar{\gamma}_{A,B}$, we can conclude that the
DT and AF schemes are better than CJ when the direct link is significantly
stronger than the others, while CJ will perform better when the direct
link is weak.

\subsubsection{Case of $\bar{\gamma}_{R,B}\rightarrow\infty$ or $0$}

\label{sec:case-barg-right}

Since $\mathcal{P}_{out}^{DT}$ is not a function of $\bar{\gamma}_{R,B}$,
it will be the same as in \eqref{eq:9}. When $\bar{\gamma}_{R,B}\rightarrow\infty$,
according to \eqref{eq:15} and using the result that $x\textrm{Ei}(-x)\rightarrow0$
as $x\rightarrow0$ \cite{Ding_Opportunistic11}, we have
\begin{equation}
\lim_{\bar{\gamma}_{R,B}\rightarrow\infty}\mathcal{P}_{out}^{AF}=1-\frac{\bar{\gamma}_{A,B}}{(2^{2R}-1)\bar{\gamma}_{A,R}+\bar{\gamma}_{A,B}}e^{-\frac{2^{2R}-1}{\rho\bar{\gamma}_{A,B}}}.\label{eq:27}
\end{equation}
For CJ when $\bar{\gamma}_{R,B}\rightarrow\infty$, intuitively the
cooperative jamming support from Bob can fully prevent the relay from
eavesdropping, and thus we will have the follwing lemma whose proof
is given in Appendix \ref{sec:proof-lemma-refs}.

\begin{lemma} \label{sec:case-bargamma_r-b} When $\bar{\gamma}_{R,B}\rightarrow\infty$,
outage events will only depend on the relay link, and the outage
probability in \eqref{eq:20} will converge to:
\begin{equation}
\lim_{\bar{\gamma}_{R,B}\rightarrow\infty}\mathcal{P}_{out}^{CJ}=\mathcal{P}\left(1+\rho\frac{|h_{R,B}|^{2}|h_{A,R}|^{2}}{|h_{R,B}|^{2}+\bar{\gamma}_{A,R}+\bar{\gamma}_{R,B}+\frac{1}{\rho}}<2^{2R}\right).\label{eq:28}
\end{equation}
\end{lemma}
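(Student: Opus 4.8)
The plan is to show that as $\bar{\gamma}_{R,B}\to\infty$ the relay's mutual information $I_R^{CJ}$ collapses to zero, so that the secrecy outage event in \eqref{eq:20} degenerates into the reliability-only event \eqref{eq:28}. Write $X=|h_{A,R}|^2\sim\exp(1/\bar{\gamma}_{A,R})$ and $Y=|h_{R,B}|^2\sim\exp(1/\bar{\gamma}_{R,B})$, independent, and set $N=1+\rho\frac{YX}{Y+\bar{\gamma}_{A,R}+\bar{\gamma}_{R,B}+1/\rho}=2^{2I_B^{CJ}}$ and $D=1+\frac{X}{Y+1/\rho}=2^{2I_R^{CJ}}$. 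Then \eqref{eq:20} is $\mathcal{P}(N/D<2^{2R})$ while \eqref{eq:28} is $\mathcal{P}(N<2^{2R})$; the numerator $N$ is common to both, so the entire claim reduces to justifying the replacement of the denominator $D$ by $1$.

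First I would prove $D\to1$ in probability. For fixed $\epsilon>0$, conditioning on $X$ and using that $Y$ is exponential with mean $\bar{\gamma}_{R,B}$,
\[
\mathcal{P}\!\left(\tfrac{X}{Y+1/\rho}>\epsilon\right)=\mathbb{E}_X\!\left[1-e^{-[X/\epsilon-1/\rho]^+/\bar{\gamma}_{R,B}}\right].
\]
The integrand is bounded by $1$ and tends to $0$ pointwise as $\bar{\gamma}_{R,B}\to\infty$, so dominated convergence gives $\mathcal{P}(X/(Y+1/\rho)>\epsilon)\to0$; hence $D\xrightarrow{p}1$, i.e. $I_R^{CJ}\xrightarrow{p}0$. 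This is the quantitative version of the intuition that strong jamming from Bob blinds the relay.

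Next I would transfer this to the events by sandwiching. Since $D\ge1$ we always have $\{N<2^{2R}\}\subseteq\{N/D<2^{2R}\}$, which already yields $\mathcal{P}_{out}^{CJ}(R)\ge\mathcal{P}(N<2^{2R})$. For the matching upper bound I would control the gap by the $\epsilon$-split
\[
\mathcal{P}\!\left(2^{2R}\le N<2^{2R}D\right)\le\mathcal{P}\!\left(\tfrac{X}{Y+1/\rho}>\epsilon\right)+\mathcal{P}\!\left(2^{2R}\le N<2^{2R}(1+\epsilon)\right),
\]
which holds because on $\{X/(Y+1/\rho)\le\epsilon\}$ one has $2^{2R}D\le2^{2R}(1+\epsilon)$. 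The first term on the right vanishes for every fixed $\epsilon$ by the previous step.

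The main obstacle is the remaining near-threshold term: one cannot merely shrink the shell width, because the law of $N$ itself drifts with $\bar{\gamma}_{R,B}$. I would dispose of it by observing that, with $V=Y/\bar{\gamma}_{R,B}\sim\exp(1)$, one has $N\Rightarrow 1+\rho\frac{VX}{V+1}=:N_\infty$ in distribution, and $N_\infty$ has an absolutely continuous law with no atom at $2^{2R}$. Therefore $\limsup_{\bar{\gamma}_{R,B}\to\infty}\mathcal{P}(2^{2R}\le N<2^{2R}(1+\epsilon))=\mathcal{P}(2^{2R}\le N_\infty\le 2^{2R}(1+\epsilon))$, and letting $\epsilon\downarrow0$ drives this to $0$. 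Combining the two bounds shows the gap between \eqref{eq:20} and \eqref{eq:28} vanishes, which is the assertion. Alternatively the shell can be bounded directly: conditioning on $Y=y$ turns it into an $X$-interval of width $O(\epsilon)$ against the exponential density of $X$, the only delicate region being $y\to0$, where the large threshold supplies a tail factor that offsets the $1/y$ growth of the interval length.
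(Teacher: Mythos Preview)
Your proof is correct and shares the paper's overarching strategy---sandwich $N/D$ between $N$ and something that collapses to $N$, so the denominator effectively disappears---but the execution differs. The paper splits on the event $\{|h_{R,B}|^2\ge\sqrt{\bar{\gamma}_{R,B}}\}$ (which has probability $e^{-1/\sqrt{\bar{\gamma}_{R,B}}}\to1$) and, on that event, replaces $|h_{R,B}|^2$ in the denominator by the deterministic threshold $\sqrt{\bar{\gamma}_{R,B}}$; it then invokes continuity in $\bar{\gamma}_{R,B}$ to argue that the resulting upper bound converges to the same limit as the lower bound $\mathcal{P}(1+X<2^{2R})$. You instead split on $\{D\le1+\epsilon\}$, use dominated convergence to get $D\xrightarrow{p}1$, and dispose of the residual shell probability $\mathcal{P}(2^{2R}\le N<2^{2R}(1+\epsilon))$ via weak convergence $N\Rightarrow N_\infty=1+\rho VX/(V+1)$ together with absolute continuity of the law of $N_\infty$. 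Your route is more explicitly probabilistic and makes transparent exactly what regularity of the limiting distribution is needed (no atom at $2^{2R}$), whereas the paper's ``continuity'' step packages the same issue less explicitly; conversely, the paper's concrete threshold $\sqrt{\bar{\gamma}_{R,B}}$ avoids any appeal to weak convergence machinery.
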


Based on Lemma \ref{sec:case-bargamma_r-b} and following
the same approach as in Appendix \ref{sec:deriv-eqref}, we have
\begin{equation}
\lim_{\bar{\gamma}_{R,B}\rightarrow\infty}\mathcal{P}_{out}^{CJ}=\lim_{\bar{\gamma}_{R,B}\rightarrow\infty}1-\frac{1}{\bar{\gamma}_{R,B}}\int_{0}^{\infty}e^{-\frac{2^{2R}-1}{\bar{\gamma}_{A,R}h^{\prime}(z)}-\frac{z}{\bar{\gamma}_{R,B}}}~dz\label{eq:29}
\end{equation}
where
\[
h^{\prime}(z)=\frac{\rho z}{z+\bar{\gamma}_{A,R}+\bar{\gamma}_{R,B}+\frac{1}{\rho}}.
\]
Further manipulation of \eqref{eq:29} reveals
\begin{align}
\lim_{\bar{\gamma}_{R,B}\rightarrow\infty}\mathcal{P}_{out}^{CJ} & =\lim_{\bar{\gamma}_{R,B}\rightarrow\infty}1-\frac{1}{\bar{\gamma}_{R,B}}\int_{0}^{\infty}e^{-\frac{2^{2R}-1}{\bar{\gamma}_{A,R}h^{\prime}(z)}-\frac{z}{\bar{\gamma}_{R,B}}}~dz\notag\\
 & =\lim_{\bar{\gamma}_{R,B}\rightarrow\infty}1-\frac{1}{\bar{\gamma}_{R,B}}e^{-\frac{2^{2R}-1}{\rho\bar{\gamma}_{A,R}}}\int_{0}^{\infty}e^{-\frac{(2^{2R}-1)(\bar{\gamma}_{R,B}+\bar{\gamma}_{A,R}+1/\rho)}{\rho\bar{\gamma}_{A,R}z}-\frac{z}{\bar{\gamma}_{R,B}}}~dz\notag\\
 & =1-e^{-\frac{2^{2R}-1}{\rho\bar{\gamma}_{A,R}}}\sqrt{\frac{4(2^{2R}-1)}{\rho\bar{\gamma}_{A,R}}}\textrm{K}_{1}\left(\sqrt{\frac{4(2^{2R}-1)}{\rho\bar{\gamma}_{A,R}}}\right),\label{eq:30}
\end{align}
where $\textrm{K}_{1}(\cdot)$ is the modified Bessel function of
the second kind, and \cite[eq.~3.324.1]{Gradshteyn_Tables00} is used
to obtain \eqref{eq:30}. Therefore, as $\bar{\gamma}_{RB}\rightarrow\infty$,
the outage probability for all schemes converges to different constants,
and the analysis does not reveal an advantage of one method over another.

When $\bar{\gamma}_{RB}\rightarrow0$, CJ is obviously not applicable
since $\mathcal{P}_{out}^{CJ}\rightarrow1$. For the the AF scheme,
applying a procedure similar to that in Appendix \ref{sec:proof-lemma-refs}
on \eqref{eq:14}, the outage probability will converge to
\begin{align}
\lim_{\bar{\gamma}_{RB}\rightarrow0}\mathcal{P}_{out}^{AF} & =\mathcal{P}\left\{ \frac{1}{2}\log_{2}\left(\frac{1+\rho|h_{A,B}|^{2}}{1+\rho|h_{A,R}|^{2}}\right)<R\right\} \label{eq:31}\\
 & \ge\mathcal{P}\left\{ \log_{2}\left(\frac{1+\rho|h_{A,B}|^{2}}{1+\rho|h_{A,R}|^{2}}\right)<R\right\} \label{eq:32}
\end{align}
where \eqref{eq:32} is equal to $\mathcal{P}_{out}^{DT}$. Thus,
DT is a better choice when $\bar{\gamma}_{RB}\rightarrow0$ due to
the resource division factor $1/2$.

\subsubsection{Case of $\bar{\gamma}_{A,R}\rightarrow0$}

\label{sec:case-barg-right-2}

In this case, it is easy to verify from \eqref{eq:21} that, for CJ,
\begin{equation}
\lim_{\bar{\gamma}_{A,R}\rightarrow0}\mathcal{P}_{out}^{CJ}=1,\label{eq:33}
\end{equation}
since $t\rightarrow\infty$ as $\bar{\gamma}_{A,R}\rightarrow0$ and
the result of the integral in \eqref{eq:21} approaches $0$. However,
for DT and AF, the outage probability will converge to constants given
by
\begin{align}
\lim_{\bar{\gamma}_{A,R}\rightarrow0}\mathcal{P}_{out}^{DT} & =1-e^{-\frac{2^{R}-1}{\rho\bar{\gamma}_{A,B}}}\label{eq:34}\\
\lim_{\bar{\gamma}_{A,R}\rightarrow0}\mathcal{P}_{out}^{AF} & =1-e^{-\frac{2^{2R}-1}{\rho\bar{\gamma}_{A,B}}}.\label{eq:35}
\end{align}
Similar to the case in Section \ref{sec:case-barg-right}, we see
from the above equations that $\lim_{\bar{\gamma}_{A,R}\rightarrow0}\mathcal{P}_{out}^{AF}$
$\ge\lim_{\bar{\gamma}_{A,R}\rightarrow0}\mathcal{P}_{out}^{DT}$,
\emph{i.e.,} the SOP of the DT scheme is lower than that of the AF
scheme and thus DT is preferred in this case.

\subsection{Multi-antenna Relay }

\label{sec:multi-antenna-relay}

In this section, we generalize our analysis to the case of a multi-antenna relay.  We will theoretically
characterize the SOP and the impact of the number of relay
antennas on the secrecy performance.

\subsubsection{Direct Transmission (DT)}

Similar to the expression in \eqref{eq:9}, when the relay uses multiple
antennas and MRC beamforming, the outage probability for a given target secrecy rate $R$
is given by
\begin{equation}
\mathcal{P}_{out}^{DT}(R)=\mathcal{P}\left\{ \log_{2}\left(\frac{1+\rho|h_{A,B}|^{2}}{1+\rho\sum_{i=1}^{K}|h_{A,i}|^{2}}\right)<R\right\} .\label{eq:36}
\end{equation}
Denote $X=|h_{A,B}|^{2}$ and $Y=\sum_{i=1}^{K}|h_{A,i}|^{2}$, and
recall that $X\sim\exp\left(\frac{1}{\bar{\gamma}_{A,R}}\right)$.
Since $Y$ can be rewritten as $\frac{\bar{\gamma}_{A,R}}{2}\sum_{i=1}^{2K}\alpha_{i}^{2}$
where $\{\alpha\}_{i=1}^{2K}$ are $2K$ independent standard normal
Gaussian random variables (r.v.s), $\sum_{i=1}^{2K}\alpha_{i}^{2}$
has a central chi-square distribution with $2K$ degrees of freedom.
Therefore, we have
\begin{equation}
\mathcal{P}_{Y}(y)=\frac{y^{K-1}}{\bar{\gamma}_{A,R}^{K}(K-1)!}e^{-\frac{y}{\bar{\gamma}_{A,R}}}\label{eq:37}
\end{equation}
and thus
\begin{align}
\mathcal{P}_{out}^{DT} & =\mathbb{E}_{Y}\left\{ F_{X}\left(\frac{2^{R}-1}{\rho}+2^{R}Y\right)\right\} \notag\\
 & =1-\int_{0}^{\infty}\mathcal{P}_{Y}(y)~e^{-\frac{1}{\bar{\gamma}_{A,B}}\left(\frac{2^{R}-1}{\rho}+2^{R}Y\right)}~dy\notag\\
 & =1-\left(\frac{\bar{\gamma}_{A,B}}{2^{R}\bar{\gamma}_{A,R}+\bar{\gamma}_{A,B}}\right)^{K}e^{-\frac{2^{R}-1}{\rho\bar{\gamma}_{A,B}}}.\label{eq:39}
\end{align}
It can be seen from \eqref{eq:39} that the SOP of the DT scheme will
approach unity as $K$ grows, which is also consistent with the intuition
that the presence of more antennas at the eavesdropper will result in a deterioration in secrecy performance.

\subsubsection{Amplify-and-Forward (AF)}

According to the signal model \eqref{eq:1} and \eqref{eq:4}, the
outage probability in \eqref{eq:14} for a target secrecy rate $R$
and $K$ antennas at the relay can be written as
\begin{equation}
{P}_{out}^{AF}(R)=\mathcal{P}\left\{ \frac{1+\rho|h_{A,B}|^{2}+\rho\frac{\sum_{i=1}^{K}|h_{i,B}|^{2}\sum_{i=1}^{K}|h_{A,i}|^{2}}{\sum_{i=1}^{K}|h_{i,B}|^{2}+K\bar{\gamma}_{A,R}+\frac{K}{\rho}}}{1+\rho\sum_{i=1}^{K}|h_{A,i}|^{2}}<2^{2R}\right\} .\label{eq:40}
\end{equation}
The exact SOP is given by the following proposition:

\begin{proposition} \label{sec:amplify-forward-af} The secrecy outage
probability for AF relaying with $K$ relay antennas is
\begin{align}
 & \mathcal{P}_{out}^{AF}(R)=1-\left[1+\frac{\bar{\gamma}_{A,R}(2^{2R}-1)}{\bar{\gamma}_{A,B}}\right]^{-K}e^{-\frac{2^{2R}-1}{\rho\bar{\gamma}_{A,B}}}\notag\\*
 & +\int_{0}^{\infty}\int_{\frac{(1+\rho y)(2^{2R}-1)}{\rho}}^{2^{2R}y+\frac{2^{2R}-1}{\rho}}\frac{y^{K-1}}{\bar{\gamma}_{A,B}\bar{\gamma}_{A,R}^{K}(K-1)!}F_{V}\left(2^{2R}+\frac{2^{2R}-1-\rho x}{\rho y}\right)e^{-\frac{x}{\bar{\gamma}_{A,B}}-\frac{y}{\bar{\gamma}_{A,R}}}~dxdy,\label{eq:41}
\end{align}
where $R$ is the target secrecy rate, and the c.d.f. of $V$ is
\begin{equation}
F_{V}(v)=1-\sum_{n=0}^{K-1}\frac{1}{n!}\left[\frac{vK(\bar{\gamma}_{A,R}+\frac{1}{\rho})}{\bar{\gamma}_{R,B}(1-v)}\right]^{n}e^{-\frac{vK(\bar{\gamma}_{A,R}+\frac{1}{\rho})}{\bar{\gamma}_{R,B}(1-v)}}.
\end{equation}
\end{proposition}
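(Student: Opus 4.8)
The plan is to reduce the outage event (40) to a statement about three independent scalar random variables and then integrate one of them out. First I would set $X=|h_{A,B}|^{2}\sim\exp(1/\bar{\gamma}_{A,B})$, $Y=\sum_{i=1}^{K}|h_{A,i}|^{2}$, and $W=\sum_{i=1}^{K}|h_{i,B}|^{2}$. Because these depend on disjoint sets of fading coefficients (direct link, first hop, second hop), they are mutually independent; moreover $Y$ and $W$ are Erlang with $K$ stages, $Y$ having the density $p_{Y}$ already recorded in (37) and $W$ having the analogous density with $\bar{\gamma}_{R,B}$ in place of $\bar{\gamma}_{A,R}$.

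The key algebraic simplification is to absorb the second-hop dependence into a single variable $V=\frac{W}{W+K(\bar{\gamma}_{A,R}+1/\rho)}\in(0,1)$, so that the awkward fraction in (40) collapses to $\rho V Y$. I would then obtain $F_{V}$ from $F_{V}(v)=\mathcal{P}\{W\le vK(\bar{\gamma}_{A,R}+1/\rho)/(1-v)\}$ together with the Erlang c.d.f. of $W$, which reproduces the stated expression for $F_{V}$. Writing $a=2^{2R}$, the outage event (40) is equivalent to $X<\frac{a-1}{\rho}+(a-V)Y$, and since $a\ge1>V$ this rearranges (for $Y>0$) to $V<a+\frac{a-1-\rho X}{\rho Y}=:v^{*}(X,Y)$, exactly the argument appearing inside $F_{V}$ in (41).

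Conditioning on $(X,Y)$ and integrating over $V$, I would partition the $(X,Y)$-plane according to where $v^{*}$ lands. One checks that $v^{*}\ge1\iff X\le\frac{(a-1)(1+\rho Y)}{\rho}$ (certain outage, $F_{V}=1$), that $v^{*}\le0\iff X\ge aY+\frac{a-1}{\rho}$ (no outage, $F_{V}=0$), and that on the intermediate slab $\frac{(a-1)(1+\rho Y)}{\rho}<X<aY+\frac{a-1}{\rho}$ the conditional outage probability is $F_{V}(v^{*})$. The first region contributes $\mathbb{E}_{Y}[F_{X}(\frac{(a-1)(1+\rho Y)}{\rho})]=1-e^{-\frac{a-1}{\rho\bar{\gamma}_{A,B}}}\mathbb{E}_{Y}[e^{-(a-1)Y/\bar{\gamma}_{A,B}}]$; evaluating the remaining expectation as the Erlang moment-generating function $(1+\bar{\gamma}_{A,R}(a-1)/\bar{\gamma}_{A,B})^{-K}$ yields the first two terms of (41). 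The intermediate slab, with the densities $p_{X}p_{Y}$ inserted verbatim, is precisely the double integral in (41).

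The Erlang c.d.f. and m.g.f. computations are routine; the step requiring the most care is the case analysis on $v^{*}$, namely verifying that the boundaries $v^{*}=1$ and $v^{*}=0$ translate exactly into the inner integration limits $\frac{(1+\rho y)(a-1)}{\rho}$ and $ay+\frac{a-1}{\rho}$, and confirming that $a-V>0$ throughout so that dividing by $(a-V)$ or by $Y$ introduces no sign reversal. This bookkeeping is what exploits the independence of $V$ from $(X,Y)$ and guarantees that the conditional outage probability is simply $F_{V}(v^{*})$ with the correct clipping at $0$ and $1$.
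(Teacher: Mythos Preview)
Your proposal is correct and mirrors the paper's own proof almost exactly: the paper defines the same $X$, $Y$, and $V$, derives $F_{V}$ from the Erlang c.d.f.\ of $W$, rewrites the outage event as $V<2^{2R}+\frac{2^{2R}-1-\rho X}{\rho Y}$, and splits the $(X,Y)$-integration at the same limits $x_{l}=\frac{(1+\rho y)(2^{2R}-1)}{\rho}$ and $x_{u}=2^{2R}y+\frac{2^{2R}-1}{\rho}$ to obtain the two displayed pieces of \eqref{eq:41}. Your added remarks about the sign of $a-V$ and the clipping of $v^{*}$ at $0$ and $1$ just make explicit the bookkeeping the paper leaves implicit.
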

\begin{IEEEproof}
Defining $X=|h_{A,B}|^{2}$, $Y=\sum_{i=1}^{K}|h_{A,i}|^{2}$, and
$V=\frac{\sum_{i=1}^{K}|h_{i,B}|^{2}}{\sum_{i=1}^{K}|h_{i,B}|^{2}+K\bar{\gamma}_{A,R}+K/\rho}$,
according to \eqref{eq:40} the SOP is given by
\begin{align}
\mathcal{P}_{out}^{AF}(R) & =\mathcal{P}\{1+\rho X-\rho(2^{2R}-V)Y<2^{2R}\}.\label{eq:42}
\end{align}
In order to obtain the SOP, we first compute the c.d.f. of $V$. It
is obvious that $F_{V}(v)=1$ when $v\ge1$ due to the fact that $V\le1$.
When $v<1$,
\begin{align}
F_{V}(v) & =\mathcal{P}\left\{ \frac{\sum_{i=1}^{K}|h_{i,B}|^{2}}{\sum_{i=1}^{K}|h_{i,B}|^{2}+K(\bar{\gamma}_{A,R}+\frac{1}{\rho})}\le v\right\} \notag\\
 & =\mathcal{P}\left\{ \sum_{i=1}^{K}|h_{i,B}|^{2}\le\frac{vK(\bar{\gamma}_{A,R}+1/\rho)}{1-v}\right\} \notag\\
 & \stackrel{a}{=}\mathcal{P}\left\{ \sum_{i=1}^{2K}\alpha_{i}^{2}\le2\frac{vK(\bar{\gamma}_{A,R}+1/\rho)}{\bar{\gamma}_{R,B}(1-v)}\right\} \notag\\
 & \stackrel{b}{=}1-\sum_{n=0}^{K-1}\frac{1}{n!}\left[\frac{vK(\bar{\gamma}_{A,R}+1/\rho)}{\bar{\gamma}_{R,B}(1-v)}\right]^{n}e^{-\frac{vK(\bar{\gamma}_{A,R}+1/\rho)}{\bar{\gamma}_{R,B}(1-v)}}\notag
\end{align}
where (b) results because $\{\alpha_{i}\}_{i=1}^{2K}$ are $2K$ independent
standard normal Gaussian r.v.s and thus $\sum_{i=1}^{2K}\alpha_{i}^{2}$
has a central chi-square distribution with $2K$ degrees of freedom.

Recall that the p.d.f.s for $X$ and $Y$ are $\mathcal{P}_{X}(x)=\frac{1}{\bar{\gamma}_{A,B}}e^{-\frac{x}{\bar{\gamma}_{A,B}}}$
and $\mathcal{P}_{Y}(y)=\frac{y^{K-1}}{\bar{\gamma}_{A,R}^{K}(K-1)!}e^{-\frac{y}{\bar{\gamma}_{A,R}}}$,
so the SOP can be written as
\begin{align}
 & \mathcal{P}_{out}^{AF}(R)=\mathbb{E}_{Y}\mathbb{E}_{X}\left\{ F_{V}\left(2^{2R}+\frac{2^{2R}-1-\rho X}{\rho Y}\right)\right\} \notag\label{eq:44}\\
 & =\int_{0}^{\infty}\int_{x_{l}}^{x_{u}}F_{V}\left(2^{2R}+\frac{2^{2R}-1-\rho X}{\rho Y}\right)\mathcal{P}_{X}(x)\mathcal{P}_{Y}(y)~dxdy+\int_{0}^{\infty}\int_{0}^{x_{l}}\mathcal{P}_{X}(x)\mathcal{P}_{Y}(y)~dxdy,
\end{align}
where the limits $x_{l}=\frac{(1+\rho y)(2^{2R}-1)}{\rho}$ and $x_{u}=2^{2R}y+\frac{2^{2R}-1}{\rho}$
can be derived from the fact that $0\le V\le1$. With some further
manipulations, the first term in \eqref{eq:44} can be computed as
\begin{align}
\int_{0}^{\infty}\int_{0}^{x_{l}}\mathcal{P}_{X}(x)\mathcal{P}_{Y}(y)~dxdy & =1-\int_{0}^{\infty}\frac{y^{K-1}}{\bar{\gamma}_{A,R}^{K}(K-1)!}e^{-y\left(\frac{1}{\bar{\gamma}_{A,R}}+\frac{2^{2R}-1}{\bar{\gamma}_{A,B}}\right)-\frac{2^{2R}-1}{\rho\bar{\gamma}_{A,B}}}\notag\label{eq:45}\\
 & =1-\left[1+\frac{\bar{\gamma}_{A,R}}{\bar{\gamma}_{A,B}}(2^{2R}-1)\right]^{-K}e^{-\frac{2^{2R}-1}{\rho\bar{\gamma}_{A,B}}}.
\end{align}
Inserting $\mathcal{P}_{X}(x)$, $\mathcal{P}_{Y}(y)$, $F_{V}(v)$
and \eqref{eq:45} to \eqref{eq:44}, the SOP for the multi-antenna
case is then obtained.
\end{IEEEproof}
\begin{corollary} \label{sec:amplify-forward-af-2} The secrecy outage
probability of AF relaying approaches unity as the number of relay
antennas grows: $\mathcal{P}_{out}^{AF}\rightarrow1$ as $K\rightarrow\infty$.
\end{corollary}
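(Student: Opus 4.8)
The plan is to start directly from the closed-form expression for $\mathcal{P}_{out}^{AF}(R)$ given in Proposition~\ref{sec:amplify-forward-af} and show that, for any fixed target rate $R>0$, the two $K$-dependent terms (the one subtracted from $1$ and the double integral added to it) both vanish as $K\to\infty$. The subtracted term $\left[1+\frac{\bar{\gamma}_{A,R}(2^{2R}-1)}{\bar{\gamma}_{A,B}}\right]^{-K}e^{-\frac{2^{2R}-1}{\rho\bar{\gamma}_{A,B}}}$ is immediate: since $R>0$ forces $2^{2R}-1>0$, the base of the power is strictly greater than one, so the whole expression decays geometrically in $K$.

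The substantive step is to control the double integral. Since $F_V$ is a c.d.f., I would invoke $0\le F_V(\cdot)\le 1$ to discard the awkward $F_V$ factor, leaving an upper bound in which the inner $x$-integral is elementary. Using the explicit limits $x_l=\frac{(1+\rho y)(2^{2R}-1)}{\rho}$ and $x_u=2^{2R}y+\frac{2^{2R}-1}{\rho}$, a short computation gives $x_u-x_l=y$, and $\frac{1}{\bar{\gamma}_{A,B}}\int_{x_l}^{x_u}e^{-x/\bar{\gamma}_{A,B}}\,dx$ equals $e^{-\frac{2^{2R}-1}{\rho\bar{\gamma}_{A,B}}}\bigl(e^{-y(2^{2R}-1)/\bar{\gamma}_{A,B}}-e^{-2^{2R}y/\bar{\gamma}_{A,B}}\bigr)$.

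It then remains to integrate this against $\mathcal{P}_Y(y)=\frac{y^{K-1}}{\bar{\gamma}_{A,R}^{K}(K-1)!}e^{-y/\bar{\gamma}_{A,R}}$, the $\mathrm{Gamma}(K,\bar{\gamma}_{A,R})$ density. Each of the two resulting $y$-integrals is just the moment generating function of $Y$ evaluated at a negative argument, producing $\left(1+\frac{(2^{2R}-1)\bar{\gamma}_{A,R}}{\bar{\gamma}_{A,B}}\right)^{-K}$ and $\left(1+\frac{2^{2R}\bar{\gamma}_{A,R}}{\bar{\gamma}_{A,B}}\right)^{-K}$ respectively. Hence the double integral is bounded above by $e^{-\frac{2^{2R}-1}{\rho\bar{\gamma}_{A,B}}}$ times the difference of these two terms; since both bases exceed one, this bound tends to $0$ as $K\to\infty$. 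Combining this with the vanishing of the subtracted term yields $\mathcal{P}_{out}^{AF}\to 1$.

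I expect the only delicate point to be the bookkeeping that recognizes the two $y$-integrals as Gamma moment generating functions after the $F_V\le1$ relaxation; everything else is routine exponential integration. One should also flag the degenerate case $R=0$, where $2^{2R}-1=0$ collapses the argument and must be excluded, consistent with the implicit assumption of a strictly positive secrecy target.
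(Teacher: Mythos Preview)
Your argument is correct, but it does more work than the paper's proof. The paper observes only that the double integral in \eqref{eq:41} is \emph{non-negative} (since $F_V\ge 0$ and the remaining factors are densities), which immediately gives the lower bound
\[
\mathcal{P}_{out}^{AF}(R)\;\ge\;1-\left[1+\frac{\bar{\gamma}_{A,R}(2^{2R}-1)}{\bar{\gamma}_{A,B}}\right]^{-K}e^{-\frac{2^{2R}-1}{\rho\bar{\gamma}_{A,B}}}\longrightarrow 1,
\]
and then simply invokes $\mathcal{P}_{out}^{AF}(R)\le 1$ (it is a probability) to squeeze. No analysis of the integral term is needed at all.

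Your route instead bounds $F_V\le 1$ and computes the resulting Gamma moment-generating integrals explicitly to force the double integral to $0$. This is unnecessary for the bare conclusion, but the computation is clean and correct (your $x_u-x_l=y$ observation and the two MGF evaluations are right), and it buys you something the paper's one-line argument does not: an explicit geometric rate, since both $\bigl(1+\frac{(2^{2R}-1)\bar{\gamma}_{A,R}}{\bar{\gamma}_{A,B}}\bigr)^{-K}$ and $\bigl(1+\frac{2^{2R}\bar{\gamma}_{A,R}}{\bar{\gamma}_{A,B}}\bigr)^{-K}$ decay geometrically in $K$. Your flag on $R>0$ is also appropriate, and matches the paper's implicit assumption.
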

\begin{IEEEproof}
According to \eqref{eq:41} in Proposition \ref{sec:amplify-forward-af},
since $\left[1+\frac{\bar{\gamma}_{AR}(2^{2R}-1)}{\bar{\gamma}_{AB}}\right]^{-K}e^{-\frac{2^{2R}-1}{\rho\bar{\gamma}_{AB}}}$
converges to $0$ as $K$ grows, and since the third term in \eqref{eq:41}
is non-negative, the corollary follows in a straightforward manner.
\end{IEEEproof}

\subsubsection{Cooperative Jamming (CJ)}

According to \eqref{eq:5} and \eqref{eq:6}, the mutual information
between Alice and Bob can be expressed as
\begin{equation}
I_{B}^{CJ}=\frac{1}{2}\log_{2}\left(1+\rho\frac{\sum_{i=1}^{K}|h_{i,B}|^{2}\sum_{i=1}^{K}|h_{A,i}|^{2}}{\sum_{i=1}^{K}|h_{i,B}|^{2}+K\bar{\gamma}_{A,R}+K\bar{\gamma}_{R,B}+\frac{K}{\rho}}\right).\notag
\end{equation}
As discussed above, when computing the mutual information available to the relay, we assume the use of an optimal MMSE receive beamformer that allows the relay to maximize her SINR \cite{Huang_Cooperative11},
\textit{i.e.},
\begin{equation}
\mathbf{w}=\mathcal{G}\left(\mathbf{h}_{A,R}
\mathbf{h}_{A,R}^{H},\mathbf{h}_{R,B}\mathbf{h}_{R,B}^{H}+
\frac{1}{\rho}\mathbf{I}_{K}\right)\label{eq:47} \; ,
\end{equation}
where $\mathcal{G}(\mathbf{A},\mathbf{B})$ is the operator that returns
the eigenvector associated with the largest generalized eigenvalue
of the matrix pencil $(\mathbf{A},\mathbf{B})$. Since $\mathbf{h}_{A,R}\mathbf{h}_{A,R}^{H}$
is rank one, we can explicitly obtain the MMSE beamformer $\mathbf{w}=\frac{\left(\mathbf{h}_{R,B}\mathbf{h}_{R,B}^{H}+
\frac{1}{\rho}\mathbf{I}_{K}\right)^{-1}\mathbf{h}_{A,R}}
{||\left(\mathbf{h}_{R,B}\mathbf{h}_{R,B}^{H}+
\frac{1}{\rho}\mathbf{I}_{K}\right)^{-1}\mathbf{h}_{A,R}||}$,
and thus the mutual information between Alice and the relay is given by
\begin{equation}
I_{R}^{CJ}=\frac{1}{2}\log_{2}\left(1+\mathbf{h}_{A,R}^{H}\left(\mathbf{h}_{R,B}\mathbf{h}_{R,B}^{H}+\frac{1}{\rho}\mathbf{I}_{K}\right)^{-1}\mathbf{h}_{A,R}\right).\notag
\end{equation}
Therefore, the secrecy outage probability of the CJ scheme can
be written as
\begin{equation}
\mathcal{P}_{out}^{CJ}=\mathcal{P}\left\{ \frac{1+\rho\frac{\sum_{i=1}^{K}|h_{i,B}|^{2}\sum_{i=1}^{K}|h_{A,i}|^{2}}{\sum_{i=1}^{K}|h_{i,B}|^{2}+K\bar{\gamma}_{A,R}+K\bar{\gamma}_{R,B}+\frac{K}{\rho}}}{1+\mathbf{h}_{A,R}^{H}\left(\mathbf{h}_{R,B}\mathbf{h}_{R,B}^{H}+\frac{1}{\rho}\mathbf{I}_{K}\right)^{-1}\mathbf{h}_{A,R}}<2^{2R}\right\} .\label{eq:49}
\end{equation}

Unlike the analysis for the AF scheme, \eqref{eq:49} is more complicated
and it is unclear how to obtain a closed-form SOP expression. Instead,
we focus here on finding an asymptotic SOP with respect to the transmit
SNR and the number of relay antennas, as detailed in the following
corollaries.

\begin{corollary} When $K>1$, the secrecy outage probability of
the CJ scheme approaches a constant as $\rho\rightarrow\infty$. \end{corollary}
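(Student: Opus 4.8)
The plan is to push the limit $\rho\to\infty$ inside the probability in \eqref{eq:49} by first identifying the almost-sure limit of the ratio and then appealing to bounded convergence. The first step is to make the quadratic form in the denominator explicit. Since $\mathbf{h}_{R,B}\mathbf{h}_{R,B}^{H}$ has rank one, the Sherman--Morrison identity gives
\[
\mathbf{h}_{A,R}^{H}\left(\mathbf{h}_{R,B}\mathbf{h}_{R,B}^{H}+\frac{1}{\rho}\mathbf{I}_{K}\right)^{-1}\mathbf{h}_{A,R}=\rho\|\mathbf{h}_{A,R}\|^{2}-\frac{\rho^{2}|\mathbf{h}_{A,R}^{H}\mathbf{h}_{R,B}|^{2}}{1+\rho\|\mathbf{h}_{R,B}\|^{2}} .
\]
As $\rho\to\infty$ this expands as $\rho\bigl(\|\mathbf{h}_{A,R}\|^{2}-|\mathbf{h}_{A,R}^{H}\mathbf{h}_{R,B}|^{2}/\|\mathbf{h}_{R,B}\|^{2}\bigr)+O(1)$, i.e.\ it grows linearly in $\rho$ with a coefficient equal to the squared norm of the component of $\mathbf{h}_{A,R}$ orthogonal to $\mathbf{h}_{R,B}$.

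Next I would extract the leading-order behaviour of the full ratio in \eqref{eq:49}. The numerator is $1+\rho A_{\rho}$, where $A_{\rho}\to \|\mathbf{h}_{R,B}\|^{2}\|\mathbf{h}_{A,R}\|^{2}/(\|\mathbf{h}_{R,B}\|^{2}+K\bar{\gamma}_{A,R}+K\bar{\gamma}_{R,B})$ because the $K/\rho$ term vanishes, so it too grows like $\rho$. Dividing numerator and denominator by $\rho$, the ratio converges almost surely to the $\rho$-independent random variable
\[
\Lambda=\frac{\dfrac{\|\mathbf{h}_{R,B}\|^{2}\|\mathbf{h}_{A,R}\|^{2}}{\|\mathbf{h}_{R,B}\|^{2}+K\bar{\gamma}_{A,R}+K\bar{\gamma}_{R,B}}}{\|\mathbf{h}_{A,R}\|^{2}-\dfrac{|\mathbf{h}_{A,R}^{H}\mathbf{h}_{R,B}|^{2}}{\|\mathbf{h}_{R,B}\|^{2}}} .
\]
The crucial point, and where the hypothesis $K>1$ enters, is that the denominator of $\Lambda$ is strictly positive with probability one precisely when $K>1$: it is the squared length of the projection of $\mathbf{h}_{A,R}$ onto the orthogonal complement of $\mathbf{h}_{R,B}$, and the event $\mathbf{h}_{A,R}\parallel\mathbf{h}_{R,B}$ has Lebesgue measure zero in $\mathbb{C}^{K}$ only when $K\ge 2$. (For $K=1$ the residual is identically zero, the denominator stays $O(1)$, the ratio diverges, and one recovers the vanishing SOP of \eqref{eq:26}.)

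Finally I would conclude that $\lim_{\rho\to\infty}\mathcal{P}_{out}^{CJ}=\mathcal{P}\{\Lambda<2^{2R}\}$, which is manifestly independent of $\rho$ and hence the desired constant. The one step requiring care is the interchange of the limit with the probability: since the indicator $\mathbf{1}\{\,\cdot\,<2^{2R}\}$ is bounded and the ratio converges almost surely to $\Lambda$, the bounded convergence theorem applies provided $2^{2R}$ is a continuity point of the c.d.f.\ of $\Lambda$, which holds because $\Lambda$ is a continuous function of jointly continuous (Rayleigh) channel gains and therefore has no atoms. I expect this measure-theoretic justification---together with checking that the $O(1)$ remainders are uniformly negligible away from the measure-zero set where the orthogonal component of $\mathbf{h}_{A,R}$ vanishes---to be the only genuinely delicate part; the Sherman--Morrison algebra itself is routine.
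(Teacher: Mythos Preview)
Your argument is correct and follows the same high-SNR route as the paper: use the Sherman--Morrison identity on the denominator of \eqref{eq:49}, extract the leading $\rho$ behaviour of numerator and denominator, and observe that the ratio converges almost surely to a $\rho$-free random variable whose c.d.f.\ at $2^{2R}$ is the limiting SOP. The paper phrases the same idea in operator language---the MMSE beamformer \eqref{eq:47} converges to the null space of $\mathbf{h}_{R,B}$---and then writes down the limiting probability \eqref{eq:50} directly, without the bounded-convergence justification you supply.

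One substantive difference is worth flagging. The paper's limiting expression \eqref{eq:50} carries $\sum_{i}|h_{A,i}|^{2}=\|\mathbf{h}_{A,R}\|^{2}$ in the denominator, whereas your Sherman--Morrison expansion gives the sharper quantity $\|\mathbf{h}_{A,R}\|^{2}-|\mathbf{h}_{A,R}^{H}\mathbf{h}_{R,B}|^{2}/\|\mathbf{h}_{R,B}\|^{2}$, i.e.\ the squared norm of the component of $\mathbf{h}_{A,R}$ orthogonal to $\mathbf{h}_{R,B}$. Your version is the correct high-SNR limit of the relay's post-MMSE SINR; the paper's step $I_{R}^{CJ}\to\frac{1}{2}\log_{2}(1+\rho\|\mathbf{h}_{A,R}\|^{2})$ overstates the relay's gain by ignoring the loss incurred in nulling the jamming direction. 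Either denominator yields a $\rho$-independent constant, so the corollary as stated is established both ways, but the numerical value of that constant differs and your expression is the accurate one. Your explicit use of the $K>1$ hypothesis to guarantee the orthogonal component is almost surely nonzero is also cleaner than the paper's treatment, which leaves that point implicit.
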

\begin{IEEEproof}
When $\rho\rightarrow\infty$, the receive beamformer in \eqref{eq:47}
converges to $\mathbf{w}\in\mathcal{N}(\mathbf{h}_{R,B})$ where $\mathcal{N}(\cdot)$
represents the null space operator. As a result, $I_{R}^{CJ}\rightarrow\frac{1}{2}\log_{2}\left(1+\rho\mathbf{h}_{A,R}^{H}\mathbf{h}_{A,R}\right)$
and thus we have
\begin{equation}
\mathcal{P}_{out}^{CJ}\simeq\mathcal{P}\left\{ \frac{\frac{\sum_{i=1}^{K}|h_{i,B}|^{2}\sum_{i=1}^{K}|h_{A,i}|^{2}}{\sum_{i=1}^{K}|h_{i,B}|^{2}+K\bar{\gamma}_{A,R}+K\bar{\gamma}_{R,B}}}{\sum_{i=1}^{K}|h_{A,i}|^{2}}<2^{2R}\right\} \label{eq:50}
\end{equation}
which is a nonzero constant independent of $\rho$.
\end{IEEEproof}
\indent

Recall from the previous section that unlike the above corollary,
when $K=1$ the SOP of CJ converges to zero, which indicates a more
favorable scenario for using CJ. Clearly, this is due to the fact
that a relay with multiple antennas is able to suppress the jamming
signal from Bob. This point is formalized in the following corollary.

\begin{corollary} \label{sec:coop-jamm-cj-1} The secrecy outage
probability of the CJ scheme approaches unity as the number of relay
antennas grows: $\mathcal{P}_{out}^{CJ}\rightarrow1$ as $K\rightarrow\infty$.
\end{corollary}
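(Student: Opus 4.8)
The plan is to show that, as $K\to\infty$, the legitimate SINR and the relay SINR inside the ratio in \eqref{eq:49} both grow linearly in $K$, but with a legitimate-to-relay coefficient strictly below unity, so that the outage event is forced to occur with probability tending to $1$. First I would apply the matrix inversion lemma (Sherman--Morrison) to the rank-one-plus-scaled-identity matrix in the relay SINR, writing
\[
\mathbf{h}_{A,R}^{H}\left(\mathbf{h}_{R,B}\mathbf{h}_{R,B}^{H}+\tfrac{1}{\rho}\mathbf{I}_{K}\right)^{-1}\mathbf{h}_{A,R}=\rho\|\mathbf{h}_{A,R}\|^{2}-\frac{\rho^{2}|\mathbf{h}_{A,R}^{H}\mathbf{h}_{R,B}|^{2}}{1+\rho\|\mathbf{h}_{R,B}\|^{2}}.
\]
This isolates a jamming-suppression term that I expect to remain $O(1)$ in $K$, reflecting the fact that with $K$ antennas the relay can null the single interference direction $\mathbf{h}_{R,B}$ while sacrificing only negligible signal energy.

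Next I would invoke the law of large numbers for the three sums of i.i.d. terms that appear: $\frac{1}{K}\|\mathbf{h}_{A,R}\|^{2}\to\bar{\gamma}_{A,R}$ and $\frac{1}{K}\|\mathbf{h}_{R,B}\|^{2}\to\bar{\gamma}_{R,B}$ (each a scaled chi-square, as already used in \eqref{eq:37}), while $\mathbf{h}_{A,R}^{H}\mathbf{h}_{R,B}=\sum_{i}h_{A,i}^{H}h_{i,B}$ is a sum of i.i.d. zero-mean terms of variance $\bar{\gamma}_{A,R}\bar{\gamma}_{R,B}$, so that $\frac{1}{K}|\mathbf{h}_{A,R}^{H}\mathbf{h}_{R,B}|^{2}$ stays bounded with mean $\bar{\gamma}_{A,R}\bar{\gamma}_{R,B}$. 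Substituting these into the Sherman--Morrison expression shows the subtractive term converges to the finite constant $\rho\bar{\gamma}_{A,R}$, whence the relay SINR behaves as $\rho K\bar{\gamma}_{A,R}(1+o(1))$. In parallel, substituting the two norm limits into the legitimate SINR of \eqref{eq:49} gives Bob's SINR $\sim\rho K\,\frac{\bar{\gamma}_{A,R}\bar{\gamma}_{R,B}}{2\bar{\gamma}_{R,B}+\bar{\gamma}_{A,R}+1/\rho}$.

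With both SINRs scaling linearly in $K$, the argument of the outage probability, namely the ratio of $1+\mathrm{SINR}_{B}$ to $1+\mathrm{SINR}_{R}$, converges in probability to
\[
\frac{\bar{\gamma}_{R,B}}{2\bar{\gamma}_{R,B}+\bar{\gamma}_{A,R}+1/\rho}<1\le2^{2R},
\]
a constant strictly below the threshold $2^{2R}$ for every fixed $\rho$. I would conclude by noting that convergence in probability of the argument to a value below $2^{2R}$ forces $\mathcal{P}\{\text{ratio}<2^{2R}\}\to1$, i.e. $\mathcal{P}_{out}^{CJ}\to1$.

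I anticipate the main obstacle to be the rigorous treatment of the stochastic limit rather than the algebra: the SOP is itself a probability over the channel, so I must upgrade the heuristic LLN substitutions to genuine convergence-in-probability statements and then deduce that the event probability tends to $1$. The cleanest way to control the relay SINR from below, avoiding any delicate concentration of the cross term, is to bound it by the SINR of the zero-forcing beamformer that projects $\mathbf{h}_{A,R}$ onto the orthogonal complement of $\mathbf{h}_{R,B}$, yielding $\rho\bigl(\|\mathbf{h}_{A,R}\|^{2}-|\mathbf{h}_{A,R}^{H}\mathbf{h}_{R,B}|^{2}/\|\mathbf{h}_{R,B}\|^{2}\bigr)$; together with the trivial upper bound $\rho\|\mathbf{h}_{A,R}\|^{2}$, this pins the relay SINR to $\rho K\bar{\gamma}_{A,R}(1+o(1))$ using only the law of large numbers for the three sums.
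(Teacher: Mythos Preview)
Your proposal is correct and shares the paper's starting point: both apply the Sherman--Morrison identity to the relay's MMSE SINR and argue that the subtractive correction term remains $O(1)$ while the main term $\rho\|\mathbf{h}_{A,R}\|^{2}$ grows like $\rho K\bar{\gamma}_{A,R}$. The difference lies in how the legitimate side is handled. You carry out a full LLN analysis of Bob's SINR to obtain the explicit limiting ratio $\bar{\gamma}_{R,B}/(2\bar{\gamma}_{R,B}+\bar{\gamma}_{A,R}+1/\rho)<1$. The paper instead short-circuits this step by bounding Bob's SINR above by $\rho\sum_{i}|h_{A,i}|^{2}$ (since the fraction multiplying it is at most $1$); after this substitution the numerator and denominator of the outage ratio differ only by the bounded Sherman--Morrison correction, so the ratio converges to exactly $1<2^{2R}$. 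The paper's route is slightly quicker because it avoids tracking the legitimate SINR altogether, whereas your route yields a sharper limiting constant and, incidentally, covers the edge case $R=0$ as well. Your zero-forcing lower bound on the relay SINR is a clean way to make the $O(1)$-correction rigorous; the paper achieves the same end by observing that $\mathbf{h}_{A,R}^{H}\mathbf{G}\mathbf{h}_{A,R}$ is (conditionally) exponential with a rate that stabilizes as $K\to\infty$.
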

\begin{IEEEproof}
Using the Sherman-Morrison formula, we have
\begin{equation}
\left(\mathbf{h}_{R,B}\mathbf{h}_{R,B}^{H}+\frac{1}{\rho}\mathbf{I}_{K}\right)^{-1}=\rho\mathbf{I}_{K}-\rho^{2}\frac{\mathbf{h}_{R,B}\mathbf{h}_{R,B}^{H}}{1+\rho||\mathbf{h}_{R,B}||^{2}}\label{eq:51}
\end{equation}
and thus \eqref{eq:49} can be rewritten as
\begin{align}
\mathcal{P}_{out}^{CJ} & =\mathcal{P}\left\{ \frac{1+\rho\frac{\sum_{i=1}^{K}|h_{i,B}|^{2}\sum_{i=1}^{K}|h_{A,i}|^{2}}{\sum_{i=1}^{K}|h_{i,B}|^{2}+K\bar{\gamma}_{A,R}+K\bar{\gamma}_{R,B}+\frac{K}{\rho}}}{1-\rho^{2}\mathbf{h}_{A,R}\mathbf{G}\mathbf{h}_{A,R}^{H}+\rho\sum_{i=1}^{K}|h_{A,i}|^{2}}<2^{2R}\right\} \notag\label{eq:52}\\*
 & \stackrel{a}{\ge}\mathcal{P}\left\{ \frac{1+\rho\sum_{i=1}^{K}|h_{A,i}|^{2}}{1-\rho^{2}\mathbf{h}_{A,R}\mathbf{G}\mathbf{h}_{A,R}^{H}+\rho\sum_{i=1}^{K}|h_{A,i}|^{2}}<2^{2R}\right\}
\end{align}
where $\mathbf{G}=\frac{\mathbf{h}_{R,B}\mathbf{h}_{R,B}^{H}}{1+\rho||\mathbf{h}_{R,B}||^{2}}$
and inequality (a) holds since $\frac{\sum_{i=1}^{K}|h_{i,B}|^{2}}{\sum_{i=1}^{K}|h_{i,B}|^{2}+K\bar{\gamma}_{A,R}+K\bar{\gamma}_{R,B}+\frac{K}{\rho}}\le1$.
Since $\mathbf{G}$ is a unit-rank Hermitian matrix, it can be seen
that $\mathbf{h}_{A,R}\mathbf{G}\mathbf{h}_{A,R}^{H}$ is exponentially
distributed as $\exp\left(\frac{1}{\bar{\gamma}_{A,R}\lambda_{G}}\right)$
where $\lambda_{G}$ is the largest eigenvalue of $\mathbf{G}$ and
$\lim_{K\rightarrow\infty}\lambda_{G}=\frac{1}{\rho}$. Consequently,
$\mathbf{h}_{A,R}\mathbf{G}\mathbf{h}_{A,R}^{H}$ is not a function
of $K$ as $K\rightarrow\infty$. On the other hand, $\sum_{i=1}^{K}|h_{A,i}|^{2}$
is obviously an increasing function of $K$ and $\sum_{i=1}^{K}|h_{A,i}|^{2}\rightarrow\infty$
as $K\rightarrow\infty$. Therefore, the lower bound in \eqref{eq:52}
approaches unity and the proof is complete.
\end{IEEEproof}
\indent

Corollarys \ref{sec:amplify-forward-af-2} and \ref{sec:coop-jamm-cj-1}
provide pessimistic conclusions regarding secrecy for an untrusted
relay implementing beamforming with a large number of antennas. However,
in the next section, we show that if the relay is
forced to perform antenna selection (e.g., because it only has a single
RF chain), under certain conditions an increase in the number of relay
antennas actually improves secrecy. It is also worth noting that in
this paper we adopt a relaying protocol where Alice does not transmit
information signals to Bob in the second phase, and thus the conclusions obtained above may not hold for other relaying protocols, such as those
where Alice can transmit signals in the second phase (\emph{e.g.}
Protocol III in \cite{Nabar_Fading04}).

\section{Secrecy with Relay Antenna Selection}

\label{sec:secrecy-with-antenna}

In this section, we consider a scenario where the untrusted relay
must perform antenna selection for receive and transmit, rather than
beamforming. In particular, we assume the untrusted relay chooses
the receive antenna with the largest channel gain for maximizing her
wiretapping ability in the first hop, while still assisting Alice
by using the best transmit antenna to forward the message to Bob in
the second hop. Such behavior is consistent with a relay that is untrusted
but not malicious. This follows a similar CSI-based antenna selection
approach assumed in traditional relaying systems \cite{Krikidis_Relay09,Amarasuriya_Feedback10,Kim_End06}.
Since the relay loses the flexibility of using beamforming to cope
with the artificial jamming signals, and since Bob is still able to
enjoy a diversity benefit due to antenna selection, a secrecy performance
improvement is expected for the CJ scheme. We will characterize the
SOP for AF and CJ and study the impact of the number of relay antennas.
We let $m$ and $n$ respectively denote the indices of the receive
and transmit antenna used by the relay.

\subsection{Direct Transmission (DT)}

In this scheme, the untrusted relay chooses the best antenna to wiretap
the signal from Alice, and the resulting SOP can be found by a straightforward
extension of \eqref{eq:36}, which in this case becomes
\begin{equation}
\mathcal{P}_{out}^{DT}(R)=\mathcal{P}\left\{ \log_{2}\left(\frac{1+\rho|h_{A,B}|^{2}}{1+\rho|h_{A,m^{*}}|^{2}}\right)<R\right\} ,\label{eq:53}
\end{equation}
where $m^{*}=\arg\max_{m}\{|h_{A,m}|^{2}\}$. Assume $Z=|h_{A,B}|^{2}$,
$Y=|h_{A,m}|^{2}$ and $Y^{*}=|h_{A,m^{*}}|^{2}$. Since $|h_{A,m^{*}}|^{2}$
increases as $K$ grows, it is obvious that $\mathcal{P}_{out}^{DT}(R)\rightarrow1$
as $K\rightarrow\infty$. To obtain the exact SOP, since $Z$ and
$Y$ are both exponentially distributed, using the theory of order
statistics \cite{Papoulis_Probability02} we have
\begin{equation}
p_{Y^{*}}(y)=\frac{K}{\bar{\gamma}_{A,R}}\sum_{n=0}^{K-1}\binom{K-1}{n}(-1)^{n}e^{-\frac{y}{\bar{\gamma}_{A,R}}(n+1)}.\label{eq:54}
\end{equation}
Therefore, the SOP can be computed as
\begin{align}
\mathcal{P}_{out}^{DT} & =\mathbb{E}_{Y}^{*}\left\{ F_{Z}\left(\frac{2^{R}-1}{\rho}+2^{R}Y^{*}\right)\right\} \notag\\
 & =1-K\sum_{n=0}^{K-1}\binom{K-1}{n}(-1)^{n}\frac{\bar{\gamma}_{A,B}}{2^{R}\bar{\gamma}_{A,R}+\bar{\gamma}_{A,B}(n+1)}e^{-\frac{2^{R}-1}{\rho\bar{\gamma}_{A,B}}}.\label{eq:55}
\end{align}

\subsection{Amplify-and-Forward (AF)}

Here we consider two cases, one where the relay has Bob's CSI for
the second hop, and one where it does not. The latter case corresponds
to the scenario where Bob is a passive receiver or where he simply
does not transmit training data to the relay.

\subsubsection{Relaying with second-hop CSI}

Similar to \eqref{eq:14}, the SOP of the AF scheme with antenna selection
for a given secrecy rate $R$ is given by
\begin{equation}
\mathcal{P}_{out}^{AF}(R)=\mathcal{P}\left\{ \frac{1}{2}\log_{2}\left(\frac{1+\rho|h_{A,B}|^{2}+\rho\frac{|h_{n^{*},B}|^{2}|h_{A,m^{*}}|^{2}}{|h_{n^{*},B}|^{2}+\bar{\gamma}_{A,R}+\frac{1}{\rho}}}{1+\rho|h_{A,m^{*}}|^{2}}\right)<R\right\} \notag,
\end{equation}
where the receive and transmit antennas on the relay are selected
using the following criteria:
\begin{align}
m^{*} & =\arg\max_{m}\{|h_{A,m}|^{2}\}\label{eq:57}\\
n^{*} & =\arg\max_{n}\{|h_{n,B}|^{2}\} \; . \label{eq:58}
\end{align}
These criteria are obtained assuming that the
untrusted relay will maximize her SNR for wiretapping first with \eqref{eq:57}
and then consider offering assistance to Bob with \eqref{eq:58}. An exact
expression for the SOP in this case is given by the following proposition.

\begin{proposition} \label{sec:amplify-forward-af2} The secrecy
outage probability for AF relaying with antenna selection can be expressed
as
\begin{align}
 \mathcal{P}_{out}^{AF}(R)= & 1-K^{2}\sum_{m=1}^{K}\sum_{n=1}^{K}\binom{K-1}{m}\binom{K-1}{n}(-1)^{m+n}\notag\label{eq:59}\\*
 & \times\frac{\bar{\gamma}_{A,B}}{(2^{2R}-1)\bar{\gamma}_{A,R}+\bar{\gamma}_{A,B}(n+1)}e^{-\frac{2^{2R}-1}{\rho\bar{\gamma}_{A,B}}}\notag\\*
 & \times\left[\mu(\beta_{n}-1)e^{\mu\beta_{n}(m+1)}\mathrm{Ei}(-\mu\beta_{n}(m+1))+\frac{1}{m+1}\right]
\end{align}
where $\mu=\frac{\bar{\gamma}_{A,R}+1/\rho}{\bar{\gamma}_{R,B}}$,
$\beta_{n}=\frac{2^{2R}\bar{\gamma}_{A,R}+\bar{\gamma}_{A,B}(n+1)}{(2^{2R}-1)\bar{\gamma}_{A,R}+\bar{\gamma}_{A,B}(n+1)}$,
$R$ is the target secrecy rate, and $\mathrm{Ei}(\cdot)$ is the
exponential integral $\mathrm{Ei}(x)=\int_{-\infty}^{x}e^{t}t^{-1}~dt$.
\end{proposition}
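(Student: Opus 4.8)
The plan is to reduce this to the single-antenna computation already carried out in Appendix~\ref{sec:secr-outage-prob} for Proposition~\ref{sec:af}, exploiting the fact that the only change introduced by antenna selection is that two of the three relevant channel gains become maxima over $K$ i.i.d.\ exponentials. Write $X=|h_{A,B}|^{2}$, $Y^{*}=|h_{A,m^{*}}|^{2}$ and $W^{*}=|h_{n^{*},B}|^{2}$. Because the receive index $m^{*}$ in \eqref{eq:57} is chosen from the first-hop gains $\{|h_{A,i}|^{2}\}$ alone, the transmit index $n^{*}$ in \eqref{eq:58} from the second-hop gains $\{|h_{i,B}|^{2}\}$ alone, and these two sets together with $h_{A,B}$ are mutually independent, the triple $(X,Y^{*},W^{*})$ is mutually independent and its joint density factors; note in particular that $Y^{*}=\max_{i}|h_{A,i}|^{2}$ and $W^{*}=\max_{i}|h_{i,B}|^{2}$ decouple precisely because the two hops are selected separately. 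Hence the outage event, which is exactly \eqref{eq:14} with $|h_{A,R}|^{2}$ and $|h_{R,B}|^{2}$ replaced by $Y^{*}$ and $W^{*}$, can be integrated by the same iterated integral (over $X$, then $Y^{*}$, then $W^{*}$) used for \eqref{eq:15}.

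First I would substitute the order-statistic densities. As in \eqref{eq:54}, the maximum of $K$ i.i.d.\ exponentials admits a finite binomial expansion,
\[
p_{Y^{*}}(y)=\frac{K}{\bar{\gamma}_{A,R}}\sum_{n=0}^{K-1}\binom{K-1}{n}(-1)^{n}e^{-\frac{(n+1)y}{\bar{\gamma}_{A,R}}},\qquad p_{W^{*}}(w)=\frac{K}{\bar{\gamma}_{R,B}}\sum_{m=0}^{K-1}\binom{K-1}{m}(-1)^{m}e^{-\frac{(m+1)w}{\bar{\gamma}_{R,B}}},
\]
so that the $n$-th summand is $K\binom{K-1}{n}(-1)^{n}\frac{1}{n+1}$ times a \emph{properly normalized} exponential density of mean $\bar{\gamma}_{A,R}/(n+1)$, and likewise the $m$-th summand is $K\binom{K-1}{m}(-1)^{m}\frac{1}{m+1}$ times a normalized exponential of mean $\bar{\gamma}_{R,B}/(m+1)$. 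Since the sums are finite I may interchange them with the integral, and each $(m,n)$ term is then \emph{verbatim} the single-antenna integrand of Appendix~\ref{sec:secr-outage-prob} with the two random means rescaled, $\bar{\gamma}_{A,R}\mapsto\bar{\gamma}_{A,R}/(n+1)$ and $\bar{\gamma}_{R,B}\mapsto\bar{\gamma}_{R,B}/(m+1)$, carrying the extra prefactors $K\binom{K-1}{n}(-1)^{n}$ and $K\binom{K-1}{m}(-1)^{m}$.

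Evaluating each term by the computation behind \eqref{eq:15}: the $X$-integral is untouched (the direct link carries no selection) and reproduces $e^{-(2^{2R}-1)/(\rho\bar{\gamma}_{A,B})}$. The $Y^{*}$-integral reproduces the single-antenna factor $\bar{\gamma}_{A,B}/[(2^{2R}-1)\bar{\gamma}_{A,R}+\bar{\gamma}_{A,B}]$ at the rescaled mean, and the accompanying $\frac{1}{n+1}$ exactly cancels the spurious $(n+1)$ thrown off by the rescaling, leaving $\bar{\gamma}_{A,B}/[(2^{2R}-1)\bar{\gamma}_{A,R}+\bar{\gamma}_{A,B}(n+1)]$ together with $\beta_{2}\mapsto\beta_{n}$ in the surviving $W$-dependence. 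The final $W^{*}$-integral produces the exponential-integral term with $\mu_{1}\mapsto\mu(m+1)$; here the renormalization $\frac{1}{m+1}$ does \emph{not} cancel, and applied to the single-antenna bracket $[\mu_{1}(\beta_{2}-1)e^{\mu_{1}\beta_{2}}\mathrm{Ei}(-\mu_{1}\beta_{2})+1]$ it yields $[\mu(\beta_{n}-1)e^{\mu\beta_{n}(m+1)}\mathrm{Ei}(-\mu\beta_{n}(m+1))+\tfrac{1}{m+1}]$, using $\tfrac{1}{m+1}\cdot\mu(m+1)=\mu$. Collecting the two binomial prefactors gives $K^{2}\binom{K-1}{m}\binom{K-1}{n}(-1)^{m+n}$ and hence \eqref{eq:59}.

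The step requiring the most care — and the place where a naive ``replace $\bar{\gamma}_{A,R}$ everywhere'' would go wrong — is distinguishing the two roles played by the symbol $\bar{\gamma}_{A,R}$. In the power-normalization term $|h_{R,B}|^{2}+\bar{\gamma}_{A,R}+1/\rho$ of \eqref{eq:10}, $\bar{\gamma}_{A,R}$ is a \emph{deterministic} average (the common value of $\{\bar{\gamma}_{A,i}\}$) and is unaffected by selection; this is why it stays \emph{unscaled} in $\mu=(\bar{\gamma}_{A,R}+1/\rho)/\bar{\gamma}_{R,B}$ and enters the $\mathrm{Ei}$ argument only through the $W^{*}$-induced factor $(m+1)$. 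By contrast, the $\bar{\gamma}_{A,R}$ inside $\beta_{2}$ is the \emph{mean of the instantaneous gain} $|h_{A,R}|^{2}=Y^{*}$, and under selection it is this occurrence alone that is rescaled by $1/(n+1)$, turning $\beta_{2}$ into $\beta_{n}$. Keeping these two instances separate throughout the iterated integral is the only genuine subtlety; the independence of $(X,Y^{*},W^{*})$ and the finiteness of the binomial sums make every remaining step a direct transcription of Appendix~\ref{sec:secr-outage-prob}.
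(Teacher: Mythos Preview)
Your proposal is correct and follows essentially the same approach as the paper: both arguments use the order-statistic binomial expansions of the maxima $Y^{*}$ and $|h_{n^{*},B}|^{2}$, invoke the independence of the three channel gains, and reduce the resulting iterated integral to the single-antenna computation of Appendix~\ref{sec:secr-outage-prob}. The paper packages the second-hop variable as $V^{*}=|h_{n^{*},B}|^{2}/(|h_{n^{*},B}|^{2}+\bar{\gamma}_{A,R}+1/\rho)$ before expanding, whereas you expand $W^{*}$ directly and then invoke \eqref{eq:15} with rescaled means; your explicit remark that the deterministic $\bar{\gamma}_{A,R}$ in the relay's power normalization must \emph{not} be rescaled (so that $\mu$ picks up only the factor $(m{+}1)$ from the $W^{*}$ expansion) is exactly the point that the paper handles implicitly by working with $V^{*}$.
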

\begin{proof} Define $X=|h_{A,B}|^{2}$, $Y=|h_{A,m}|^{2}$,
$V=\frac{|h_{n,B}|^{2}}{|h_{n,B}|^{2}+\bar{\gamma}_{A,R}+1/\rho}$,
$Y^{*}=|h_{A,m^{*}}|^{2}$, $V^{*}=\frac{|h_{n^{*},B}|^{2}}{|h_{n^{*},B}|^{2}+\bar{\gamma}_{A,R}+1/\rho}$,
and note that the p.d.f. of $Y^{*}$ is given in \eqref{eq:54}. For V, using
the Jacobian transformation, we have
\[
p_{V}(v)=\frac{\bar{\gamma}_{A,R}+1/\rho}{\bar{\gamma}_{R,B}(1-v)^{2}}e^{-\frac{(\bar{\gamma}_{A,R}+1/\rho)v}{\bar{\gamma}_{R,B}(1-v)}},
\]
and the p.d.f. of $V^{*}$ can be expressed using order statistics
as
\begin{equation}
p_{V^{*}}(v)=\frac{K\mu}{(1-v)^{2}}\sum_{m=0}^{K-1}\binom{K-1}{m}(-1)^{m}e^{-\frac{\mu v}{1-v}(m+1)},\label{eq:60}
\end{equation}
where $\mu=\frac{\bar{\gamma}_{A,R}+1/\rho}{\bar{\gamma}_{R,B}}$.
The proof of \eqref{eq:59} is completed by inserting \eqref{eq:54}
and \eqref{eq:60} into
\[
{P}_{out}^{AF}(R)=\mathcal{P}\left\{ Z<2^{2R}\right\} =\mathbb{E}_{V^{*}}\{\mathbb{E}_{Y^{*}}\{F_{Z|Y^{*},V^{*}}(2^{2R})\}\}
\]
where $Z=\frac{1+\rho X+\rho V^{*}Y^{*}}{1+\rho Y^{*}}$. \end{proof}

\begin{corollary} \label{sec:amplify-forward-af-1} The secrecy outage
probability of AF relaying approaches unity as the number of relay
antennas grows: $\mathcal{P}_{out}^{AF}\rightarrow1$ as $K\rightarrow\infty$.
\end{corollary}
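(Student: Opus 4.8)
The plan is to sidestep the alternating double sum in \eqref{eq:59} altogether and argue directly from the distributional representation established in the proof of Proposition~\ref{sec:amplify-forward-af2}, namely $\mathcal{P}_{out}^{AF}(R)=\mathcal{P}\{Z<2^{2R}\}$ where $Z=\frac{1+\rho X+\rho V^{*}Y^{*}}{1+\rho Y^{*}}$, with $X=|h_{A,B}|^{2}$ independent of $K$, $Y^{*}=\max_{1\le m\le K}|h_{A,m}|^{2}$, and $V^{*}=\frac{|h_{n^{*},B}|^{2}}{|h_{n^{*},B}|^{2}+\bar{\gamma}_{A,R}+1/\rho}\in(0,1)$. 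The structural observation driving the whole argument is that the second-hop antenna-selection gain enters $Z$ only through $V^{*}$, which is bounded by $1$ no matter how large $K$ is, whereas the relay's wiretapping gain enters through $Y^{*}$, the maximum of $K$ i.i.d.\ exponentials, which diverges.

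First I would bound $Z$ from above using $V^{*}\le1$, giving $Z\le1+\frac{\rho X}{1+\rho Y^{*}}$. Consequently the outage event $\{Z<2^{2R}\}$ contains the event $\{\frac{\rho X}{1+\rho Y^{*}}<2^{2R}-1\}$, so that $\mathcal{P}_{out}^{AF}(R)\ge\mathcal{P}\{\frac{\rho X}{1+\rho Y^{*}}<2^{2R}-1\}$, and $2^{2R}-1>0$ since the target secrecy rate satisfies $R>0$.

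Next I would establish $Y^{*}\to\infty$ almost surely as $K\to\infty$: since the $|h_{A,m}|^{2}$ are i.i.d.\ $\exp(1/\bar{\gamma}_{A,R})$, one has $\mathcal{P}\{Y^{*}\le t\}=(1-e^{-t/\bar{\gamma}_{A,R}})^{K}\to0$ for every fixed $t$, and $Y^{*}$ is nondecreasing in $K$. As $X$ does not depend on $K$, it follows that $\frac{\rho X}{1+\rho Y^{*}}\to0$ almost surely, hence $\mathcal{P}\{\frac{\rho X}{1+\rho Y^{*}}<2^{2R}-1\}\to1$. Combining this with the trivial upper bound $\mathcal{P}_{out}^{AF}(R)\le1$ and sandwiching yields $\mathcal{P}_{out}^{AF}\to1$, which is the claim.

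I expect the only delicate point to be stating the convergence $\frac{\rho X}{1+\rho Y^{*}}\to0$ cleanly; this is routine once $Y^{*}$ is identified as a maximum of $K$ i.i.d.\ exponentials (a two-step argument, first taking $K\to\infty$ for a fixed threshold on $Y^{*}$ and then letting the threshold grow, makes it fully rigorous without invoking almost-sure convergence if preferred). An alternative would be to extract the limit directly from the closed form \eqref{eq:59}, but the $K^{2}$ prefactor multiplying an alternating-sign double sum makes the cancellation awkward to control, so the distributional route is decidedly preferable. Intuitively, the corollary records that selecting the strongest of $K$ receive antennas lets the untrusted relay's effective wiretapping SNR diverge, while Bob's second-hop diversity benefit saturates at $V^{*}\le1$ and cannot keep pace.
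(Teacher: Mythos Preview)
Your argument is correct and follows essentially the same route as the paper: both bound $V^{*}\le1$ to obtain a lower bound on $\mathcal{P}_{out}^{AF}$ and then show this lower bound tends to $1$ as $K\to\infty$. The only cosmetic difference is that the paper further simplifies via monotonicity in $\rho$ to reach $\mathcal{P}(X/Y^{*}<2^{2R}-1)$ and then writes down an explicit closed-form lower bound of the shape $1-[\,\cdot\,]^{K}$, whereas you argue directly that $Y^{*}\to\infty$ forces $\rho X/(1+\rho Y^{*})\to0$; either finishing step suffices for the corollary.
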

\begin{IEEEproof}
This corollary can be proved by showing that a lower bound for $\mathcal{P}_{out}^{AF}$
goes to 1 as $K\rightarrow\infty$. Following the notation in the
proof of Proposition \ref{sec:amplify-forward-af2}, we have
\begin{align}
\mathcal{P}_{out}^{AF}(R) & =\mathcal{P}\left(\frac{1+\rho X+\rho V^{*}Y^{*}}{1+\rho Y^{*}}<2^{2R}\right)\notag\\*
 & \stackrel{a}{\ge}\mathcal{P}\left(\frac{1+\rho X+\rho Y^{*}}{1+\rho Y^{*}}<2^{2R}\right)\label{eq:61}\\*
 & \stackrel{b}{\ge}\mathcal{P}\left(\frac{X}{Y^{*}}<2^{2R}-1\right)\notag\\
 & =\mathcal{P}\left(\min_{m}\left\{ \frac{|h_{A,B}|^{2}}{|h_{A,m}|^{2}}\right\} <2^{2R}-1\right)\notag\\*
 & \stackrel{c}{=}1-\left[1-\frac{\bar{\gamma}_{A,R}(2^{2R}-1)}{\bar{\gamma}_{A,B}+\bar{\gamma}_{A,R}(2^{2R}-1)}\right]^{K},\label{eq:62}
\end{align}
where it is obvious that \eqref{eq:62} converges to 1 as $K$ goes
to $\infty$. Inequality (a) holds since $V^{*}\le1$. The fraction
in \eqref{eq:61} is a quasi-linear function of $\rho$, and is monotonically
increasing with respect to $\rho$ since $X+Y^{*}\ge Y^{*}$; thus
inequality (b) is obtained by letting $\rho\rightarrow\infty$. To
obtain (c), we have used the result in \eqref{eq:70} that
\[
\mathcal{P}\left\{ \frac{|h_{A,B}|^{2}}{|h_{A,m}|^{2}}\le u\right\} =\frac{\bar{\gamma}_{A,R}u}{\bar{\gamma}_{A,B}+\bar{\gamma}_{A,R}u}.
\]

\end{IEEEproof}
Corollary \ref{sec:amplify-forward-af-1} shows that although both
the relay and Bob receive diversity gain from an increasing number
of relay antennas, the untrusted relay accrues a proportionally greater
benefit to the detriment of the information confidentiality.

\subsubsection{Relaying without second-hop CSI}

In this case, the relay is forced to choose a random antenna for the
second hop transmission, and the exact SOP is simply a special case
of \eqref{eq:59}:
\begin{align}
{P}_{out}^{AF}(R) & =\mathbb{E}_{V}\{\mathbb{E}_{Y^{*}}\{F_{Z|Y^{*},V}(2^{2R})\}\}\notag\label{eq:63}\\
 & =1-K\sum_{n=1}^{K}\binom{K-1}{n}(-1)^{n}\frac{\bar{\gamma}_{A,B}[\mu(\beta_{n}-1)e^{\mu\beta_{n}}\mathrm{Ei}(-\mu\beta_{n})+1]}{(2^{2R}-1)\bar{\gamma}_{A,R}+\bar{\gamma}_{A,B}(n+1)}e^{-\frac{2^{2R}-1}{\rho\bar{\gamma}_{A,B}}}
\end{align}
where $\mu$ and $\beta_{n}$ were given in Proposition \ref{sec:amplify-forward-af2}.
It is obvious that the performance in this case is always worse than
the case where the second-hop CSI is available. However, we will see below that
for CJ, the lack of second-hop CSI can lead to improved secrecy.

\subsection{Cooperative Jamming (CJ)}

\label{sec:coop-jamm-cj-3}

The relay's antenna selection protocol is slightly different in this
case since the relay must account for the interference from Bob in
the first hop.

\subsubsection{Relaying with second-hop CSI}

We first consider the case where the relay possesses the CSI for the
second-hop. Similar to \eqref{eq:20}, the corresponding SOP for the
CJ protocol is given by
\begin{equation}
\mathcal{P}_{out}^{CJ}(R)=\mathcal{P}\left(\frac{1+\rho\frac{|h_{n^{*},B}|^{2}|h_{A,m^{*}}|^{2}}{|h_{n^{*},B}|^{2}+\bar{\gamma}_{A,R}+\bar{\gamma}_{R,B}+\frac{1}{\rho}}}{1+\frac{|h_{A,m^{*}}|^{2}}{|h_{m^{*},B}|^{2}+\frac{1}{\rho}}}<2^{2R}\right)\label{eq:64}
\end{equation}
but in this case the receive and transmit antenna at the relay are
selected by
\begin{align}
m^{*} & =\arg\max_{m}\left\{ \frac{|h_{A,m}|^{2}}{|h_{m,B}|^{2}}\right\} \label{eq:65}\\
n^{*} & =\arg\max_{n}\{|h_{n,B}|^{2}\}.
\end{align}
where \eqref{eq:65} indicates that to improve its performance, the
relay chooses its receive antenna to maximize the ratio of the power
of Alice's signal to the power of Bob's jamming. It is difficult to
exactly calculate the SOP in this case. However, it can still be observed
that $\mathcal{P}_{out}^{CJ}\rightarrow1$ as $K\rightarrow\infty$
since the denominator in \eqref{eq:64} tends to increase with the
growth of $K$, while the numerator in \eqref{eq:64} is upper bounded
by $1+\rho|h_{A,m^{*}}|^{2}$ which will not necessarily increase
as $K$ grows. However, as explained next, a different conclusion
is obtained if the relay does not possess Bob's CSI.

\subsubsection{Relaying without second-hop CSI}

Here we assume that the relay has no information about $h_{R,B}$,
which applies to the case where Bob transmits no training data to
the relay, and jams only when Alice is transmitting so the relay cannot
collect interference information. Alternatively, a more advanced training
sequence design, such as the methods for discriminatory channel
estimation in \cite{Chang_Training10,Huang_Two11}, can be applied
to prevent the relay from acquiring CSI from Bob. Thus, the relay
uses the receive antenna with the largest channel gain during the
first hop, and then uses either the same or some random antenna for
transmission during the second hop. In this case, the SOP is given
by a slightly different expression than \eqref{eq:64}:
\begin{equation}
\mathcal{P}_{out}^{CJ}(R)=\mathcal{P}\left(\frac{1+\rho\frac{|h_{m^{*},B}|^{2}|h_{A,m^{*}}|^{2}}{|h_{m^{*},B}|^{2}+\bar{\gamma}_{A,R}+\bar{\gamma}_{R,B}+\frac{1}{\rho}}}{1+\frac{|h_{A,m^{*}}|^{2}}{|h_{m^{*},B}|^{2}+\frac{1}{\rho}}}<2^{2R}\right)\label{eq:66}
\end{equation}
where $m^{*}=\arg\max_{m}\{|h_{A,m}|^{2}\}$. Note that since $|h_{m^{*},B}|^{2}$
is independent of $|h_{A,m^{*}}|^{2}$, it is equivalent to selecting
the transmit antenna randomly. Therefore, following the result in
Appendix \ref{sec:deriv-eqref}, we can compute the SOP as
\begin{align}
\mathcal{P}_{out}^{CJ}(R) & =\mathcal{P}\left(\phi(|h_{m^{*},B}|^{2})|h_{A,m^{*}}|^{2}<2^{2R}-1\right)\notag\\
 & =\frac{1}{\bar{\gamma}_{R,B}}\int_{0}^{t}e^{-\frac{z}{\bar{\gamma}_{R,B}}}~dz+\frac{1}{\bar{\gamma}_{R,B}}\int_{t}^{\infty}\left[1-e^{-\frac{2^{2R}-1}{\bar{\gamma}_{A,R}\phi(z)}}\right]^{K}e^{-\frac{z}{\bar{\gamma}_{R,B}}}~dz\label{eq:67}\\
 & =(1-e^{-\frac{t}{\bar{\gamma}_{R,B}}})+\frac{1}{\bar{\gamma}_{R,B}}\sum_{n=0}^{K}\binom{K}{n}(-1)^{n}\int_{t}^{\infty}e^{-\frac{(2^{2R}-1)n}{\bar{\gamma}_{A,R}\phi(z)}-\frac{z}{\bar{\gamma}_{R,B}}}~dz,\label{eq:68}
\end{align}
where $\phi(z)$ and $t$ are provided in \eqref{eq:79} and \eqref{eq:82}
respectively. According to \eqref{eq:67} and \eqref{eq:68}, we can
also give the following corollary.

\begin{corollary} \label{sec:relay-with-second} Without second-hop
CSI at the relay, the SOP of the CJ scheme with antenna selection
decreases as $K$ grows and converges to $1-e^{-\frac{t}{\bar{\gamma}_{R,B}}}$,
where
\[
t=\frac{(2^{2R}-1)+\sqrt{(2^{2R}-1)^{2}+\rho2^{2R+1}(\bar{\gamma}_{A,R}+\bar{\gamma}_{R,B}+1/\rho)}}{2\rho}.
\]
\end{corollary}

This corollary indicates that if the second-hop CSI can be hidden
from the relay, although the legitimate user loses the diversity benefits
that come from transmit antenna selection, the overall secrecy
performance is still improved since the relay loses the diversity
gain due to cooperative jamming, while the legitimate user can
still achieve a diversity gain from the first-hop antenna selection.

\section{Numerical Results}

\label{sec:nr}

In this section, we present numerical examples of the outage performance for the DT, AF and CJ transmission schemes in both single-antenna
and multi-antenna scenarios. The SOP is computed for various values of the transmit powers,
average channel gains, and number of antennas. In all cases, the normalized
target secrecy rate is set to $R=0.1$ bits per channel use as assumed
in \cite{Bloch_Wireless08,Krikidis_Relay09}.

\subsection{Single-Antenna Case\label{sec:single-relay-case}}

\begin{figure}[h]
\centering \includegraphics[width=0.55\textwidth]{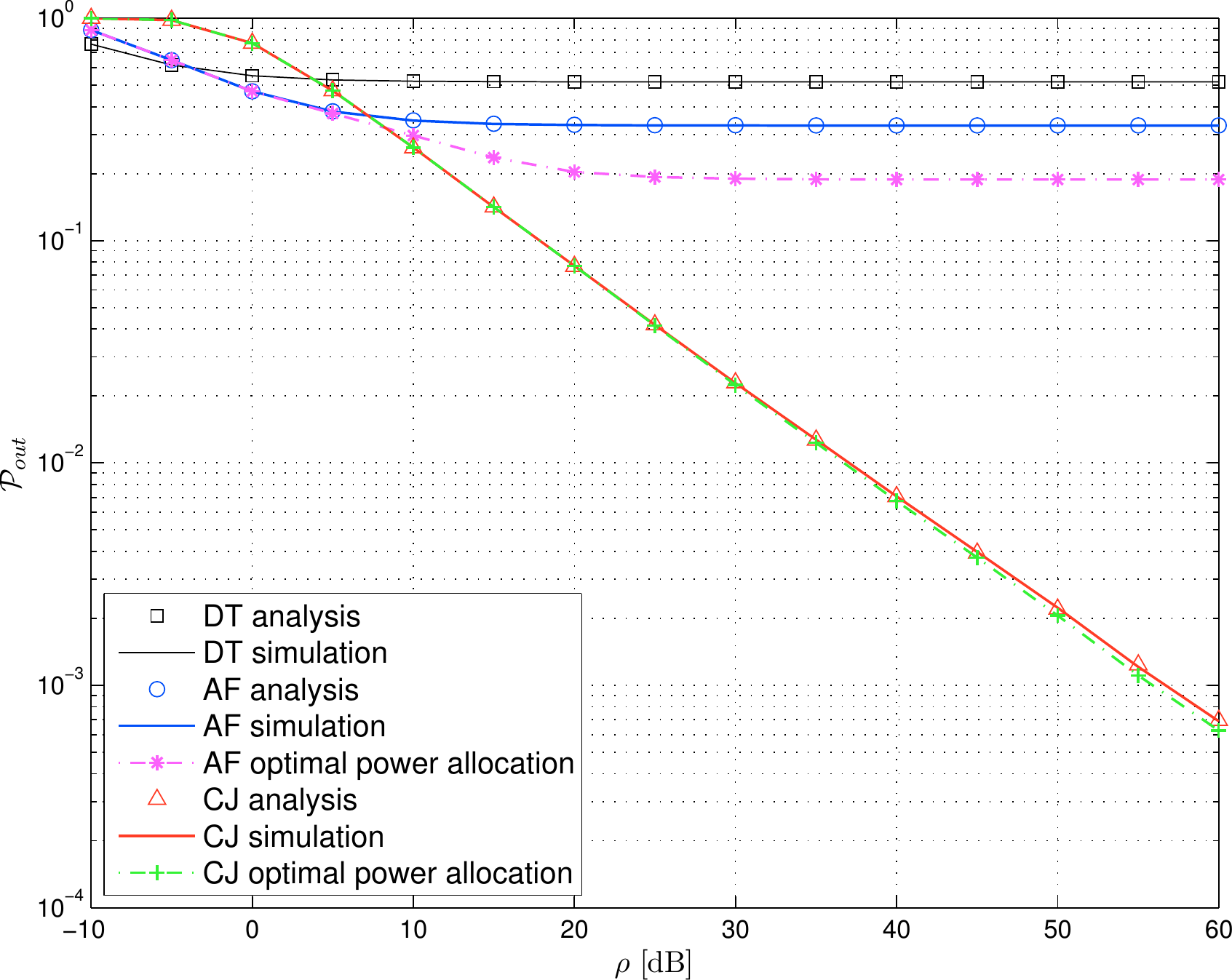}
\caption{\label{fig:osp}Outage probability versus $\rho$, single antenna
relay, $\bar{\gamma}_{A,B}=\bar{\gamma}_{A,R}=0$dB, $\bar{\gamma}_{R,B}=5$dB,
analytical results computed with Eqs. \eqref{eq:9} for DT, \eqref{eq:15}
for AF, and \eqref{eq:21} for CJ.}
\end{figure}

Fig.~\ref{fig:osp} depicts the outage probability as a function
of the transmit SNR $\rho$, assuming the average channel gains are
$\bar{\gamma}_{A,B}=\bar{\gamma}_{A,R}=0$dB, $\bar{\gamma}_{R,B}=5$dB.
The analytical SOP results for DT, AF and CJ are evaluated through
Eqs.~\eqref{eq:9}, \eqref{eq:15} and \eqref{eq:21}, and are seen
to agree well with the simulations, exactly predicting the performance
cross-over points. Ignoring the available relay link and treating
it as a pure adversary as in DT is clearly suboptimal for medium to
high SNR regimes. This figure shows that when $\rho\rightarrow\infty$,
the outage probability converges to a constant for DT and AF while
it goes to 0 for CJ, which agrees with the discussion in Section \ref{sec:case-p-rightarrow}.
This is due to the fact that the jamming signals from Bob only selectively
interfere with the untrusted relay and have no impact on the overall
two-hop data signal reception. Therefore, the outage performance for
CJ is better than AF for high SNR, while the converse is true
in the low SNR regime. We also show in this figure the SOPs
for AF and CJ assuming an optimal power allocation obtained by direct
numerical optimization.  It can be seen that the performance gap between the fixed and optimal power allocations is more obvious for AF than for the CJ scheme, since AF utilizes the direct link between Alice
and Bob and thus the secrecy performance is more sensitive to the
power allocation. The figure also illustrates that the performance under the optimal power allocation still follows the asymptotic analysis conducted in Section~\ref{sec:asymptotic-behavior}.

The impact of $\bar{\gamma}_{R,B}$ on performance is illustrated
in Fig.~\ref{fig:osrrb}, where $\bar{\gamma}_{A,B}=5$dB, $\bar{\gamma}_{A,R}=0$dB,
and $\rho=15$dB. Observe that when $\bar{\gamma}_{R,B}\rightarrow0$,
the outage probability for CJ approaches 1, due to its sensitivity
to the quality of the second hop, while the performance of both DT
and AF converges to nonzero constant values. Although not obvious,
DT still exhibits a gain over AF due to its efficient resource usage,
as can be seen from Eqs.~\eqref{eq:31} and \eqref{eq:32}. Note
that from \eqref{eq:32}, we expect that this gain will increase with
a higher target secrecy rate $R$. It is also worth noting that although
this figure shows that CJ has the best performance as $\bar{\gamma}_{R,B}\rightarrow\infty$,
the relative performance of these schemes will change with different
values of $\bar{\gamma}_{A,R}$ and $\bar{\gamma}_{A,B}$, and thus
we can not draw any definite conclusions in this asymptotic case.
\begin{figure}[h]
\centering \includegraphics[width=0.55\textwidth]{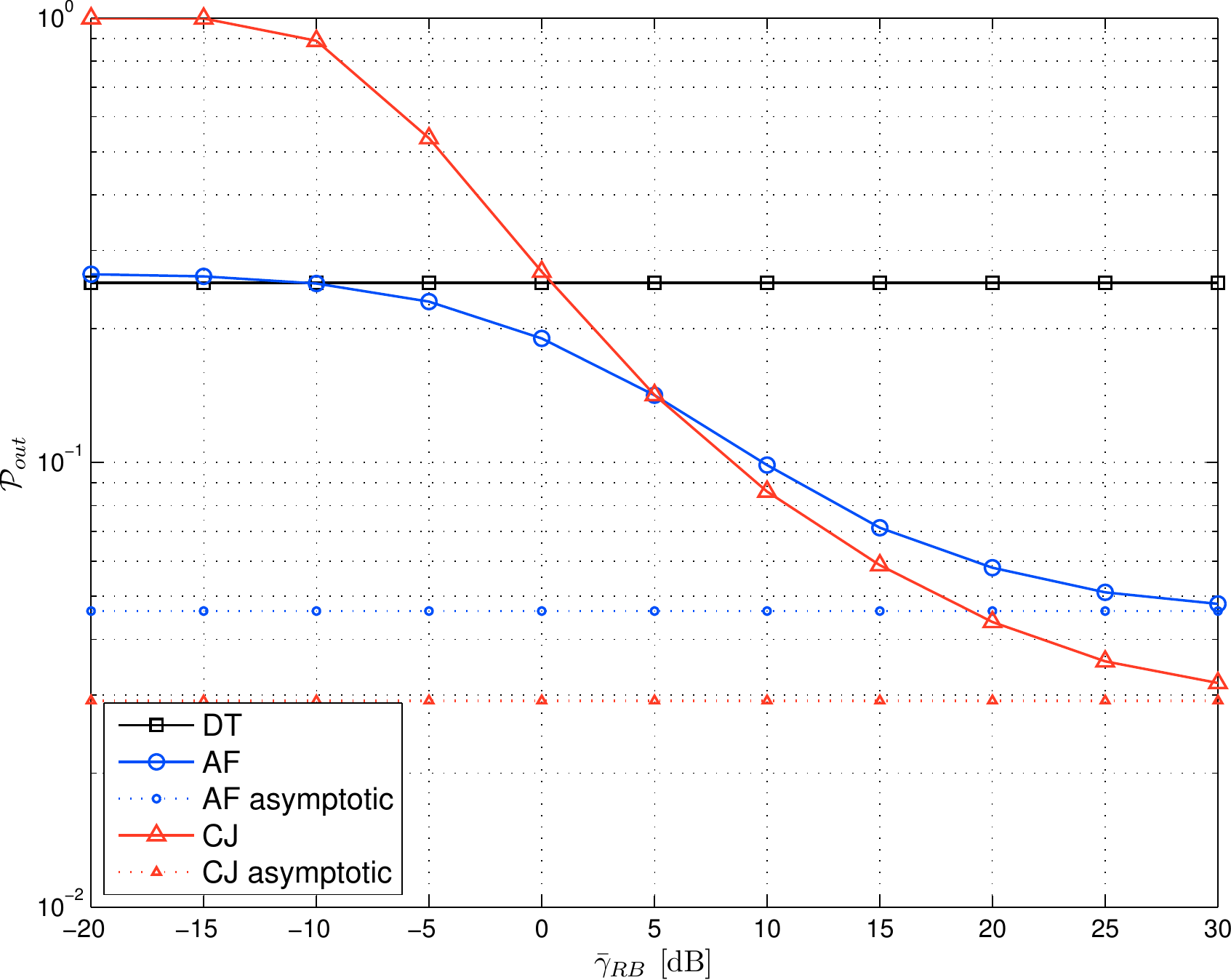}
\caption{\label{fig:osrrb} Outage probability versus $\bar{\gamma}_{R,B}$,
single-antenna relay, $\bar{\gamma}_{A,B}=5$dB, $\bar{\gamma}_{A,R}=0$dB,
$\rho=15$dB, asymptotic results computed with Eqs.~\eqref{eq:27}
for AF, and \eqref{eq:30} for CJ.}
\end{figure}

Fig.~\ref{fig:osrar} depicts the impact of the first hop channel
gain $\bar{\gamma}_{A,R}$ on the outage performance, where $\bar{\gamma}_{A,B}=0$dB,
$\bar{\gamma}_{A,R}=5$dB, and $\rho=20$dB. It is interesting to
see that when $\bar{\gamma}_{A,R}$ is either extremely small or large,
CJ approaches outage. This is because when $\bar{\gamma}_{A,R}\rightarrow\infty$,
the untrusted relay is nearly colocated with Alice and secure transmission
is impossible. On the other hand, when $\bar{\gamma}_{A,R}\rightarrow0$,
it is hard to establish a reliable relay link from Alice to Bob without
the direct link, and thus the outage probability will also approach
unity. Therefore, CJ can only achieve its best performance for in-between
values of $\bar{\gamma}_{A,R}$. Again, our analysis allows the optimal
operating regime for CJ to be determined. Also, as $\bar{\gamma}_{A,R}\rightarrow0$,
the asymptotic results validate the analytical expectations in \eqref{eq:34}
and \eqref{eq:35} which predict that DT will asymptotically outperform
AF. Therefore, the outage performance in Fig.~\ref{fig:osrrb} and
Fig.~\ref{fig:osrar} agrees with the analytical prediction in Section
\ref{sec:case-p-rightarrow} that DT is preferred when either relay
hop is weak.

Fig.~\ref{fig:osrab} shows the performance as a function of $\bar{\gamma}_{A,B}$,
with $\bar{\gamma}_{A,R}=2$dB, $\bar{\gamma}_{R,B}=10$dB and $\rho=10$dB.
It is shown that when $\bar{\gamma}_{A,B}$ is small, CJ is the best
scheme since both DT and AF will be in outage. Conversely, with large
$\bar{\gamma}_{A,B}$, the outage probability for DT and AF decays
to 0. Moreover, as discussed in Section \ref{sec:case-barg-right-1},
the outage probability for both DT and AF is seen to decay as $1/\bar{\gamma}_{A,B}$.

\begin{figure}[h]
\centering \includegraphics[width=0.55\textwidth]{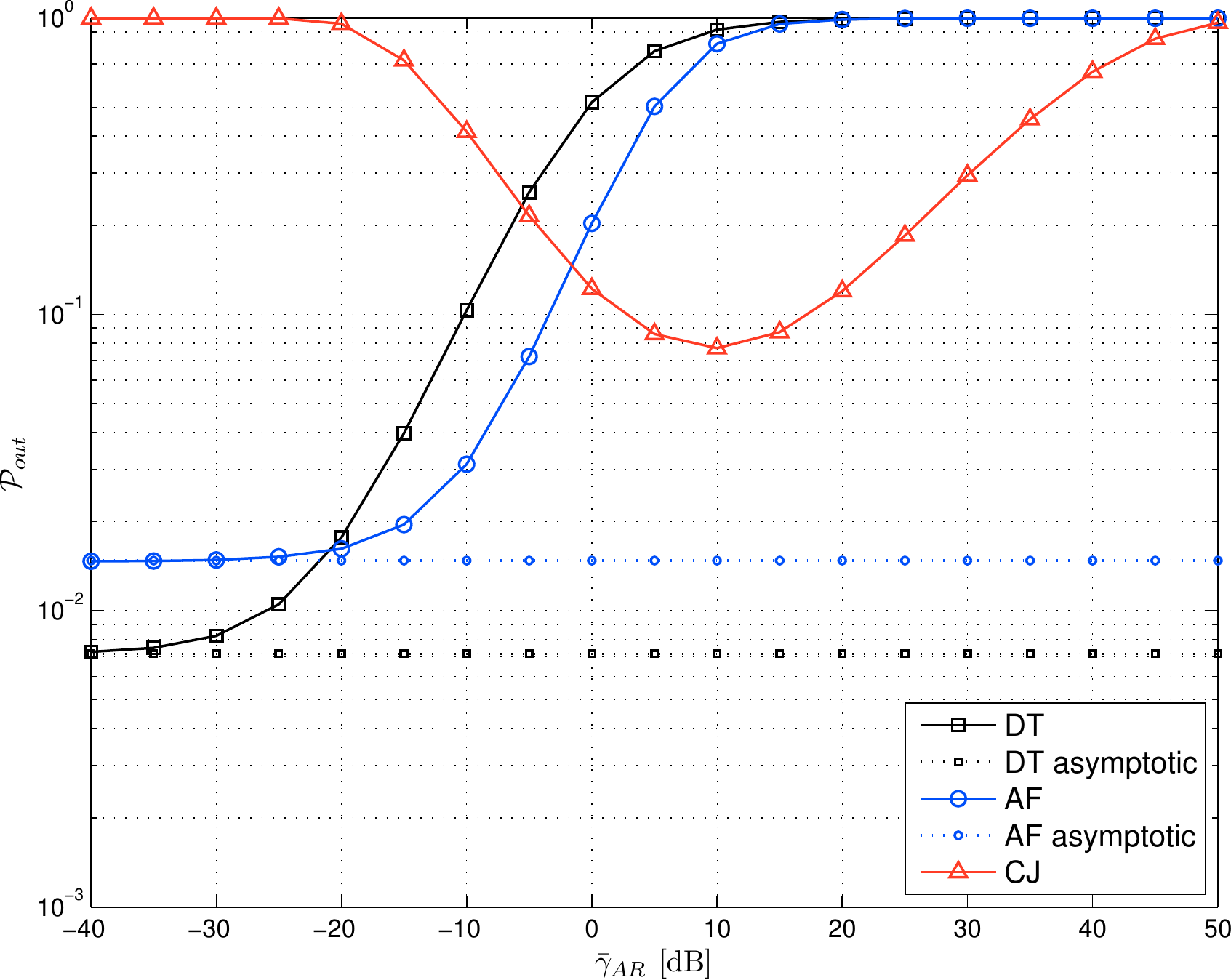}
\caption{\label{fig:osrar}Outage probability versus $\bar{\gamma}_{A,R}$,
single-antenna relay, $\bar{\gamma}_{A,B}=0$dB, $\bar{\gamma}_{R,B}=5$dB,
$\rho=20$dB, asymptotic results computed with Eqs.~\eqref{eq:34}
for DT, and \eqref{eq:35} for AF.}
\end{figure}

\begin{figure}[h]
\centering \includegraphics[width=0.55\textwidth]{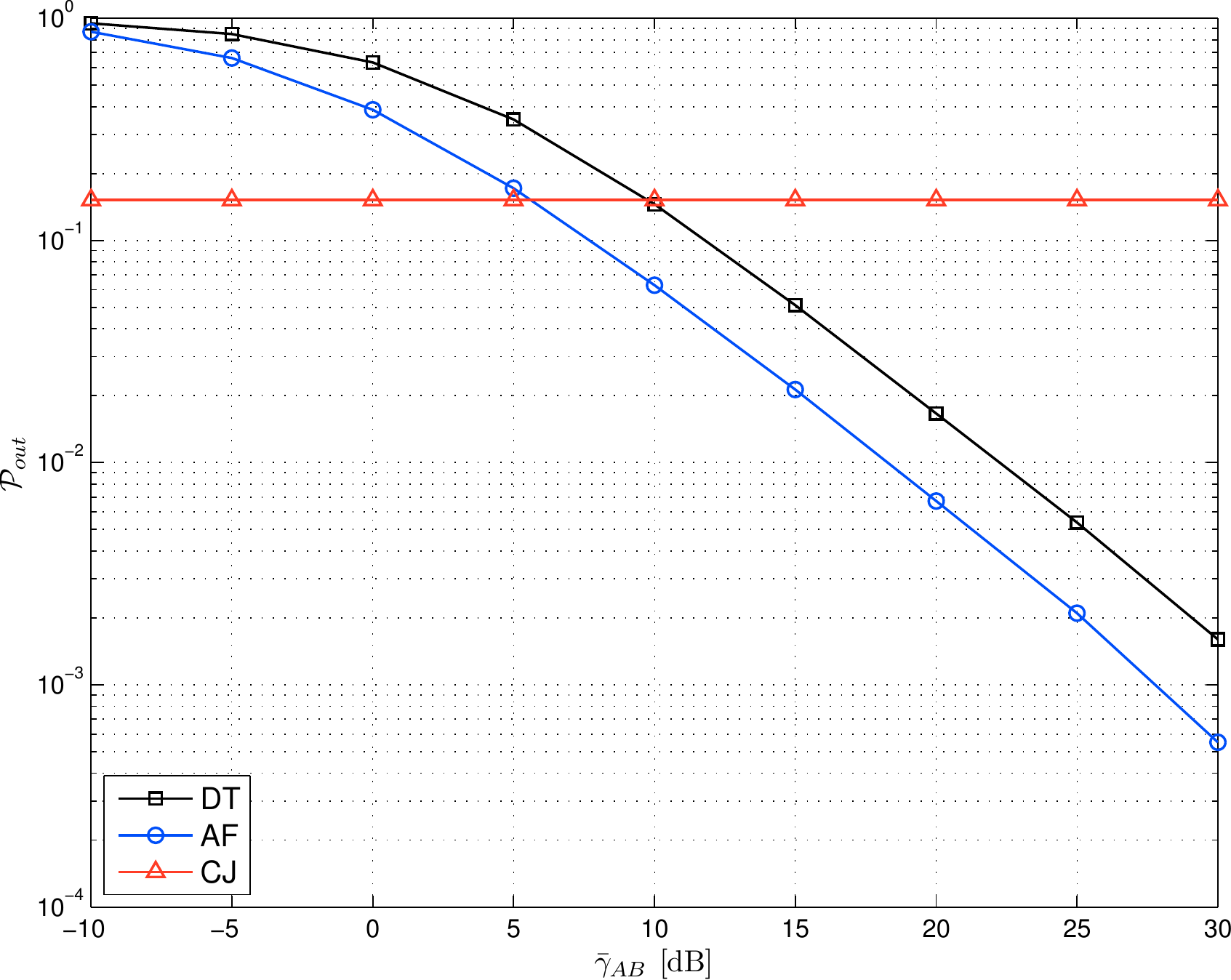}
\caption{\label{fig:osrab}Outage probability versus $\bar{\gamma}_{A,B}$,
single-antenna relay, $\bar{\gamma}_{A,R}=2$dB, $\bar{\gamma}_{R,B}=10$dB,
$\rho=10$dB.}
\end{figure}

In Fig.~\ref{fig:osrabrrb}, the outage performance is shown when
$\bar{\gamma}_{A,B}$ and $\bar{\gamma}_{R,B}$ both increase simultaneously,
where $\bar{\gamma}_{A,R}=2$dB and $\rho=10$dB. Note that since
the performance of CJ does not depend on the direct link, the asymptotic
SOP of CJ can still be characterized via Eq.~\eqref{eq:30}, which
indicates that the SOP of CJ will converge to a constant. On the other
hand, we see that AF outperforms the other schemes since its outage
probability decays to zero faster. This is due to the fact that when
only $\bar{\gamma}_{A,B}$ increases, the outage probability of DT
and AF decays with the same slope (see Fig.~\ref{fig:osrab}), and
when $\bar{\gamma}_{R,B}$ also increases at the same time, AF will
enjoy a better second hop channel (and thus the outage probability
decays faster), which does not benefit DT.

\begin{figure}[h]
\centering \includegraphics[width=0.55\textwidth]{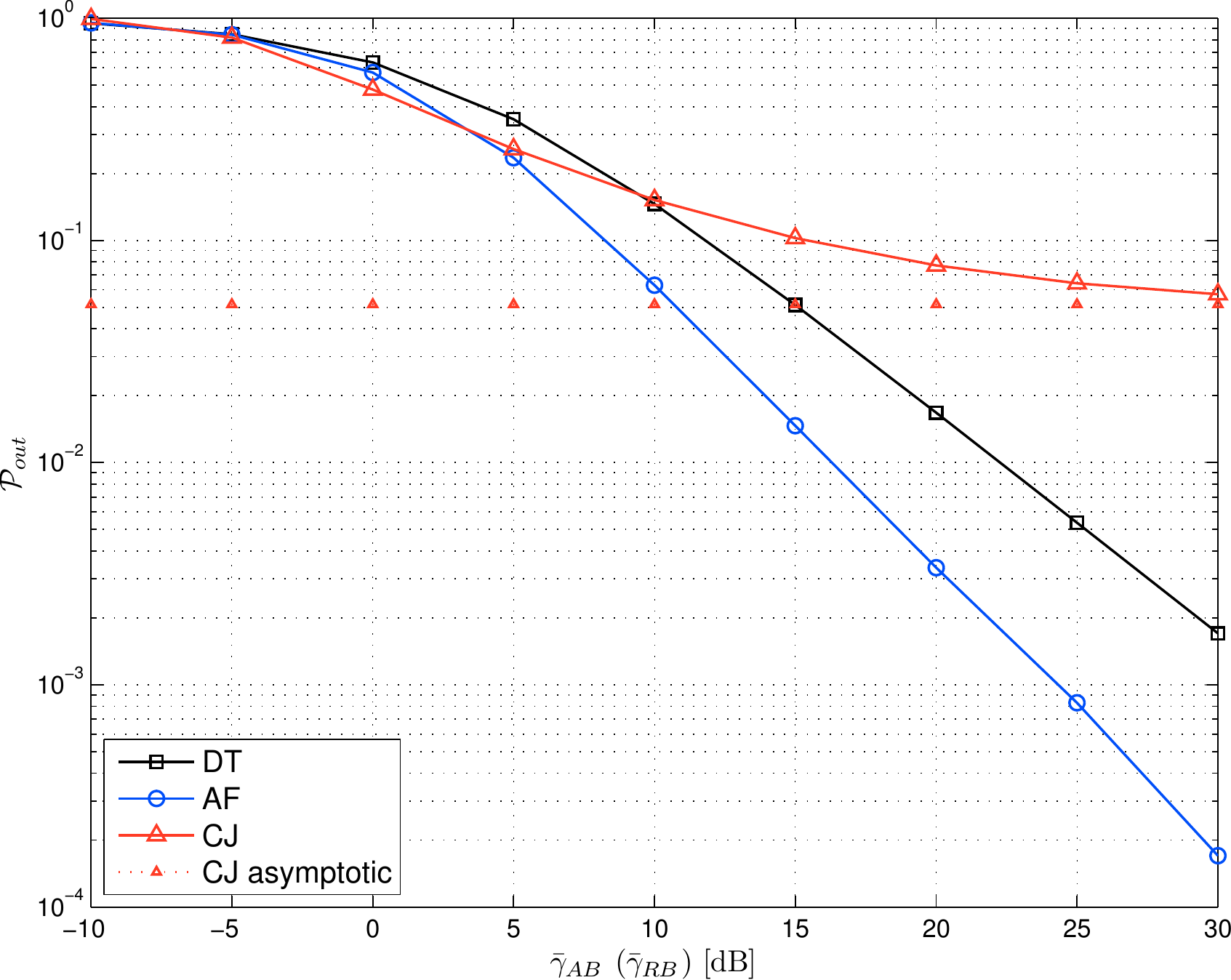}
\caption{\label{fig:osrabrrb}Outage probability versus $\bar{\gamma}_{A,B}$,
single-antenna relay, $\bar{\gamma}_{R,B}=\bar{\gamma}_{A,B}$, $\bar{\gamma}_{A,R}=2$dB,
$\rho=10$dB, asymptotic results computed with Eq.~\eqref{eq:30}.}
\end{figure}

\subsection{Multi-Antenna Case}

Fig.~\ref{fig:multiant_k} compares the multi-antenna SOP as a
function of the number of relay antennas $K$ for average channel
gains $\bar{\gamma}_{A,B}=5$dB, $\bar{\gamma}_{A,R}=0$dB and
$\bar{\gamma}_{R,B}=10$dB with $\rho=30$dB.  As seen from the DT
and AF curves, the exact analytical SOP for the multi-antenna
scenario derived in \eqref{eq:39} and \eqref{eq:41} agrees very
well with the simulations. When all available antennas are used
at the untrusted relay, the figure shows that the SOPs of all
schemes converge to unity as $K$ grows, as predicted in Section
\ref{sec:multi-antenna-relay}. Moreover, we see that when $K$
changes from 1 to 2, the SOP of CJ increases rapidly, because
multi-antenna receive beamforming at the relay can suppress the
intentional interference from Bob, rendering CJ
ineffective. Also, the performance achieved by an optimal power 
allocation is shown in the figure, and we see a slight reduction
in the outage probability for both AF and CJ. Consistent
with the result in Fig.~\ref{fig:osp}, the performance gain of
power allocation for CJ is not obvious when $K=1$.

\begin{figure}[h]
\centering \includegraphics[width=0.55\textwidth]{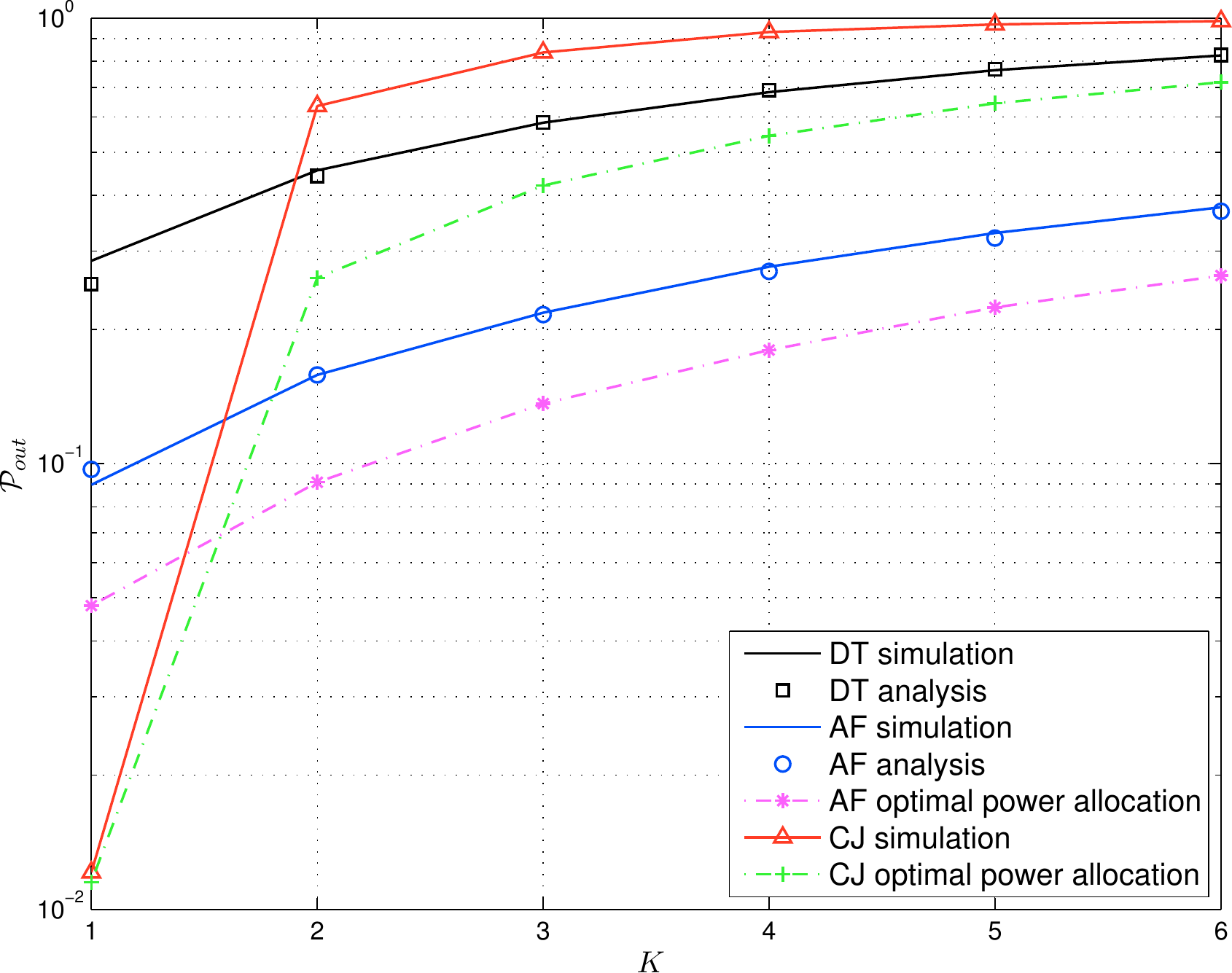}
\caption{\label{fig:multiant_k} Outage probability versus number of relay
antennas, multi-antenna relay, $\bar{\gamma}_{A,B}=5$dB, $\bar{\gamma}_{A,R}=0$dB,
$\bar{\gamma}_{R,B}=10$dB, $\rho=30$dB, analytical results computed
with Eqs.~\eqref{eq:39} for DT, \eqref{eq:41} for AF.}
\end{figure}

The secrecy performance for various relaying schemes with antenna
selection is shown in Figs.~\ref{fig:antsel_p} and \ref{fig:antsel_k_compare}.
In Fig.~\ref{fig:antsel_p}, the SOP is evaluated as a function of
$\rho$ for $\bar{\gamma}_{A,B}=5$dB, $\bar{\gamma}_{A,R}=0$dB and
$\bar{\gamma}_{R,B}=5$dB with six antennas employed at the relay.
We see that the analytical results derived in \eqref{eq:55}, \eqref{eq:59}
and \eqref{eq:68} respectively match the simulations for DT, AF and CJ without
the second-hop CSI. The schemes with antenna selection
show properties similar to those for a single-antenna relay in Fig.~\ref{fig:osp}
as $\rho$ increases; \textit{i.e.} the SOP of DT and AF converges
to constants and that of CJ decays to zero. As expected, CJ with second-hop
CSI decays to zero faster than CJ without CSI, since in the former approach the best transmit antenna at the relay is chosen, and such diversity
gain is more obvious with larger $\rho$, as seen in the numerator
of \eqref{eq:64}.

The SOP of relaying schemes with and without antenna selection for
increasing $K$ is depicted in Fig.~\ref{fig:antsel_k_compare},
and again we see that the SOP of DT and AF converges to unity. For
both the AF and CJ schemes, the antenna selection schemes demonstrate
lower SOP. This is due to the fact that the relay loses the array gain under antenna selection, and this
gain is more beneficial to the relay than to Bob. We also see the
significantly improved secrecy that results for CJ when the relay
can only perform antenna selection instead of beamforming. Interestingly,
the SOP of CJ with second-hop CSI decreases first and then gradually
increases as the number of relay antennas grows. This is because the
second-hop diversity gain at first outweighs the first-hop diversity
for the relay, and then there is a diminishing marginal return for
the second-hop with larger $K$ ( $\frac{|h_{n^{*},B}|^{2}}{|h_{n^{*},B}|^{2}+\bar{\gamma}_{A,R}+\bar{\gamma}_{R,B}+\frac{1}{\rho}}\rightarrow1$
as $K\rightarrow\infty$ in \eqref{eq:64}) and the secrecy performance
gradually degrades. However, when the relay does not have second-hop
CSI, the SOP monotonically decreases with $K$ and converges to $1-e^{-\frac{t}{\bar{\gamma}_{R,B}}}$,
which validates Corollary \ref{sec:relay-with-second}. Therefore,
for large $K$, in order to maintain confidentiality, CJ should be
used with the second-hop CSI concealed from the untrusted relay.

\begin{figure}[h]
\centering \includegraphics[width=0.55\textwidth]{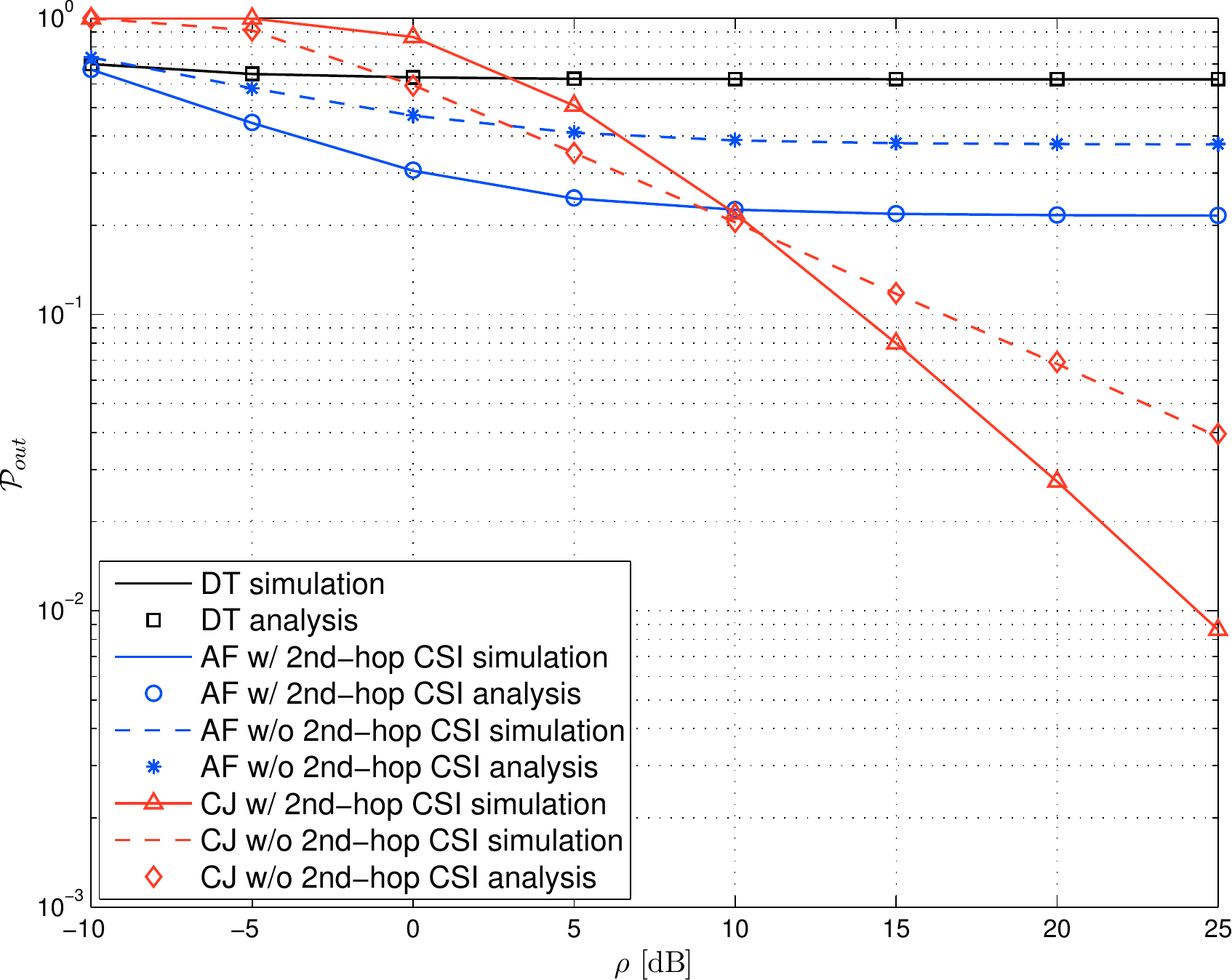}
\caption{\label{fig:antsel_p} Outage probability versus $\rho$, multi-antenna
relay ($K=6$) with antenna selection , $\bar{\gamma}_{A,B}=5$dB,
$\bar{\gamma}_{A,R}=0$dB, $\bar{\gamma}_{R,B}=5$dB, analytical results
computed with Eqs.~\eqref{eq:55} for DT, \eqref{eq:59} and \eqref{eq:63}
for AF, and \eqref{eq:68} for CJ.}
\end{figure}

\begin{figure}[h]
\centering \includegraphics[width=0.55\textwidth]{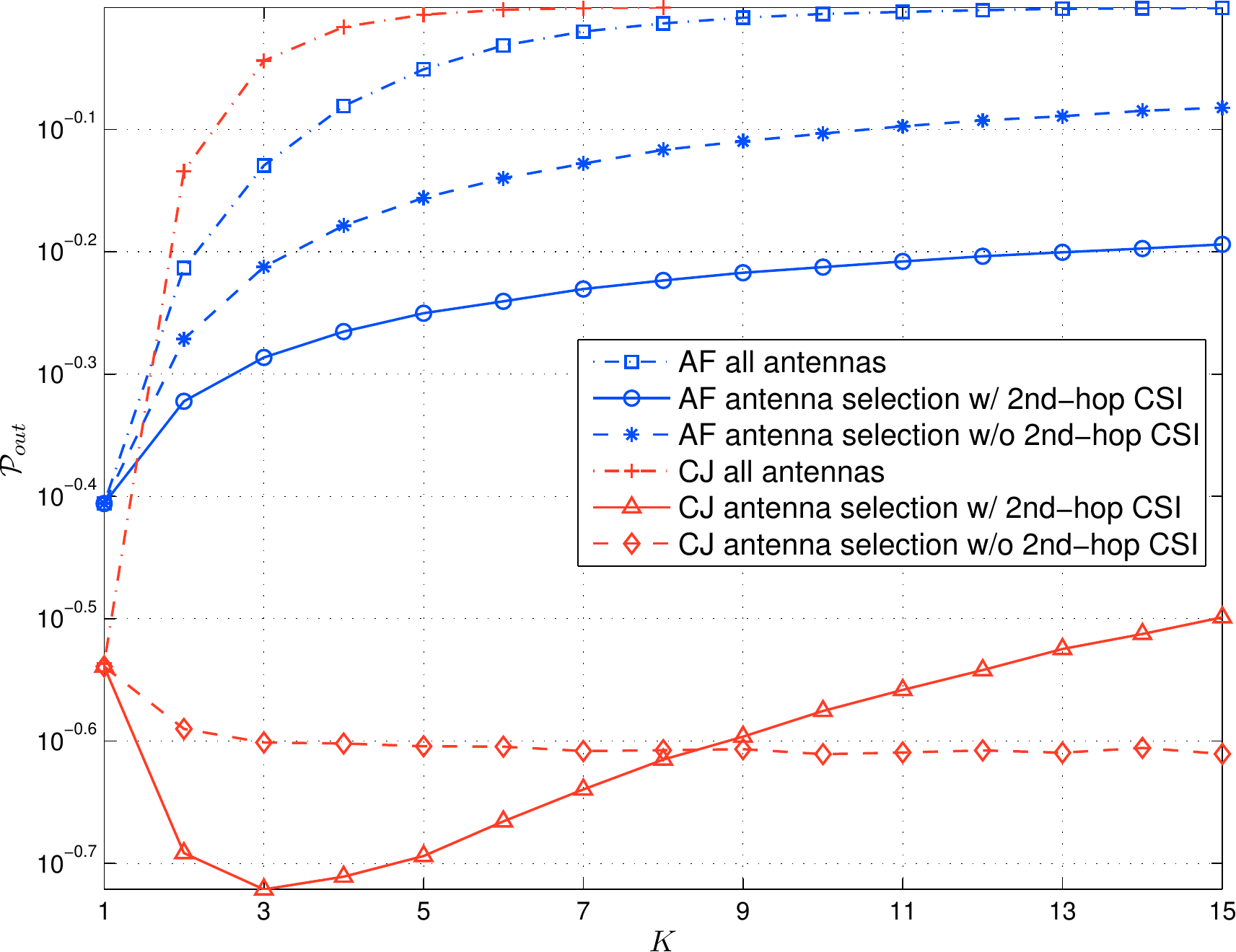}
\caption{\label{fig:antsel_k_compare} Outage probability versus number of
relay antennas, $\bar{\gamma}_{A,B}=\bar{\gamma}_{A,R}=0$dB, $\bar{\gamma}_{R,B}=2$dB,
$\rho=12$dB.}
\end{figure}

\section{Conclusions}

\label{sec:con}

This paper has analyzed a three-node network where the source can
potentially utilize an untrusted multi-antenna relay to supplement
the direct link to its destination. The untrusted relay is in effect
an eavesdropper, although also assisting the source with cooperative
transmission. We derived the exact secrecy outage probability of three
different transmission policies: direct transmission without using
the relay, conventional non-regenerative relaying, and cooperative
jamming by the destination. The SOP computation allows performance
transitions between the three algorithms to be determined for different
scenarios. An asymptotic analysis of the outage probabilities is also
conducted to elicit the optimal policies for different operating regimes.
When the relay has a large number of antennas, we showed that the
SOP for all three approaches converges to unity. However, when antenna
selection is used at the relay, the secrecy performance for all schemes
is improved, and the CJ scheme in particular can obtain a significant
diversity gain with a moderate growth in the number of antennas. Moreover,
if the destination conceals its CSI from the relay, secrecy will not
be compromised as the antenna number grows. Our theoretical predictions
were validated via various numerical examples.

\appendices{ }

\section{Probability of positive secrecy rate for AF}

\label{sec:prob-posit-secr}

Assuming all channels $\gamma_{ij}$ are independent, define two random
variables $U$ and $V$ as
\begin{align*}
U & =\frac{|h_{A,B}|^{2}}{|h_{A,R}|^{2}},\quad V=\frac{|h_{R,B}|^{2}}{|h_{R,B}|^{2}+\bar{\gamma}_{A,R}+\frac{1}{\rho}}.
\end{align*}
The c.d.f. of $U$ is given by
\begin{align}
F_{U}(u) & =\mathcal{P}\left\{ \frac{|h_{A,B}|^{2}}{|h_{A,R}|^{2}}\le u\right\} \notag\label{eq:70}\\
 & =\int_{0}^{\infty}\int_{0}^{yu}p_{|h_{A,B}|^{2}}(x)p_{|h_{A,R}|^{2}}(y)~dxdy\notag\\
 & =\frac{\bar{\gamma}_{A,R}u}{\bar{\gamma}_{A,B}+\bar{\gamma}_{A,R}u}.
\end{align}
Differentiating $F_{U}(u)$ with respect to $u$, we obtain the p.d.f.
of $U$ as $p_{U}(u)=\frac{\bar{\gamma}_{A,B}\bar{\gamma}_{A,R}}{(\bar{\gamma}_{A,B}+\bar{\gamma}_{A,R}u)^{2}}$.
For $V$, the c.d.f. is given by
\begin{equation}
F_{V}(v)=\mathcal{P}\left\{ \frac{|h_{R,B}|^{2}}{|h_{R,B}|^{2}+\bar{\gamma}_{A,R}+\frac{1}{\rho}}\le v\right\} =\mathcal{P}\left\{ |h_{R,B}|^{2}\le g(v)\right\} =1-e^{-\frac{g(v)}{\bar{\gamma}_{R,B}}}\label{eq:71}
\end{equation}
where $g(v)=\frac{(\bar{\gamma}_{A,R}+\frac{1}{\rho})v}{1-v}$. Differentiating
$F_{V}(v)$ with respect to $v$, we have
\begin{equation}
p_{V}(v)=\frac{\bar{\gamma}_{A,R}+\frac{1}{\rho}}{(1-v)^{2}\bar{\gamma}_{R,B}}e^{-\frac{(\bar{\gamma}_{A,R}+\frac{1}{\rho})v}{\bar{\gamma}_{R,B}(1-v)}}.\label{eq:72}
\end{equation}

Next, we can calculate $\mathcal{P}_{pos}^{AF}$ as
\begin{align}
\mathcal{P}_{pos}^{AF} & =\mathcal{P}\left\{ \frac{|h_{A,B}|^{2}}{|h_{A,R}|^{2}}+\frac{|h_{R,B}|^{2}}{|h_{R,B}|^{2}+\bar{\gamma}_{A,R}+\frac{1}{\rho}}>1\right\} \notag\label{eq:73}\\*
 & =\mathcal{P}\left\{ U+V>1\right\} .
\end{align}
Since $U$ and $V$ are independent,
\begin{align}
 & \mathcal{P}_{pos}^{AF}\notag=\int_{0}^{1}\int_{1-v}^{\infty}p_{U,V}(u,v)~dudv=\int_{0}^{1}\int_{1-v}^{\infty}p_{U}(u)p_{V}(v)~dudv\notag\\
 & =\frac{\bar{\gamma}_{A,B}\bar{\gamma}_{A,R}(\bar{\gamma}_{A,R}+\frac{1}{\rho})}{\bar{\gamma}_{R,B}}\int_{0}^{1}\int_{1-v}^{\infty}\frac{1}{(\bar{\gamma}_{A,B}+\bar{\gamma}_{A,R}u)^{2}}\frac{1}{(1-v)^{2}}e^{-\frac{(\bar{\gamma}_{A,R}+\frac{1}{\rho})v}{\bar{\gamma}_{R,B}(1-v)}}~dudv\notag\\
 & =\frac{\bar{\gamma}_{A,B}\bar{\gamma}_{A,R}(\bar{\gamma}_{A,R}+\frac{1}{\rho})}{\bar{\gamma}_{R,B}}\int_{0}^{1}\frac{1}{(\bar{\gamma}_{A,B}\bar{\gamma}_{A,R}+\bar{\gamma}_{A,R}^{2}-\bar{\gamma}_{A,R}^{2}v)}\frac{1}{(1-v)^{2}}e^{-\frac{(\bar{\gamma}_{A,R}+\frac{1}{\rho})v}{\bar{\gamma}_{R,B}(1-v)}}~dv\notag\\
 & =\mu_{1}\int_{0}^{\infty}\frac{x+1}{x+\beta_{1}}e^{-\mu_{1}x}~dx\label{eq:74}\\
 & =\mu_{1}\left[\int_{0}^{\infty}\frac{x}{x+\beta_{1}}e^{-\mu_{1}x}~dx+\int_{0}^{\infty}\frac{1}{x+\beta_{1}}e^{-\mu_{1}x}~dx\right]\notag\\
 & =\mu_{1}(\beta_{1}-1)e^{\mu_{1}\beta_{1}}\mathrm{Ei}(-\mu_{1}\beta_{1})+1,\label{eq:75}
\end{align}
where in \eqref{eq:74}, we use the transformation $x=\frac{v}{1-v}$,
$\mu_{1}=\frac{\bar{\gamma}_{A,R}+1/\rho}{\bar{\gamma}_{R,B}}$ and
$\beta_{1}=1+\frac{\bar{\gamma}_{A,R}}{\bar{\gamma}_{A,B}}$. Eq.~\eqref{eq:75}
is obtained using the identity in \cite[eq.~3.353.5]{Gradshteyn_Tables00}.

\section{Proof of Proposition \ref{sec:af}}

\label{sec:secr-outage-prob} Let $X=|h_{A,B}|^{2}$ and $Y=|h_{A,R}|^{2}$
be exponentially distributed random variables and define
\[
Z=\frac{1+\rho|h_{A,B}|^{2}+\rho\frac{|h_{R,B}|^{2}|h_{A,R}|^{2}}{|h_{R,B}|^{2}+\bar{\gamma}_{A,R}+\frac{1}{\rho}}}{1+\rho|h_{A,R}|^{2}},\quad V=\frac{|h_{R,B}|^{2}}{|h_{R,B}|^{2}+\bar{\gamma}_{A,R}+\frac{1}{\rho}}
\]
where the p.d.f. of $V$ is given by \eqref{eq:72}. We thus have
\begin{align}
{P}_{out}^{AF} & =\mathcal{P}\left(\frac{1+\rho X+\rho YV}{1+\rho Y}<2^{2R}\right)\notag\\
 & =\mathbb{E}_{V}\{\mathbb{E}_{Y}\{F_{Z|Y,V}(2^{2R})\}\}\notag\\
 & =\int_{0}^{1}\int_{0}^{\infty}\left\{ 1-e^{-\frac{1}{\bar{\gamma}_{A,B}}\left(\frac{2^{2R}-1}{\rho}+(2^{2R}-v)y\right)}\right\} p_{Y}(y)p_{V}(v)~dydv\notag\\
 & =1-\frac{\bar{\gamma}_{A,R}+\frac{1}{\rho}}{\bar{\gamma}_{A,R}\bar{\gamma}_{R,B}}e^{-\frac{2^{2R}-1}{\rho\bar{\gamma}_{A,B}}}\int_{0}^{1}\int_{0}^{\infty}\frac{1}{(1-v)^{2}}~e^{-\frac{(2^{2R}-v)y}{\bar{\gamma}_{A,B}}-\frac{y}{\bar{\gamma}_{A,R}}-\frac{(\bar{\gamma}_{A,R}+\frac{1}{\rho})v}{\bar{\gamma}_{R,B}(1-v)}}~dydv\notag\\
 & =1-\frac{\bar{\gamma}_{A,B}(\bar{\gamma}_{A,R}+\frac{1}{\rho})}{\bar{\gamma}_{R,B}}e^{-\frac{2^{2R}-1}{\rho\bar{\gamma}_{A,B}}}\int_{0}^{1}\frac{1}{(2^{2R}\bar{\gamma}_{A,R}+\bar{\gamma}_{A,B}-\bar{\gamma}_{A,R}v)(1-v)^{2}}e^{-\frac{(\bar{\gamma}_{A,R}+\frac{1}{\rho})v}{\bar{\gamma}_{R,B}(1-v)}}~dv\notag\\
 & =1-\frac{\mu_{1}\bar{\gamma}_{A,B}}{(2^{2R}-1)\bar{\gamma}_{A,R}+\bar{\gamma}_{A,B}}e^{-\frac{2^{2R}-1}{\rho\bar{\gamma}_{A,B}}}\int_{0}^{\infty}\frac{x+1}{x+\beta_{2}}e^{-\mu_{1}x}~dx\label{eq:76}\\
 & =1-\frac{\bar{\gamma}_{A,B}}{(2^{2R}-1)\bar{\gamma}_{A,R}+\bar{\gamma}_{A,B}}e^{-\frac{2^{2R}-1}{\rho\bar{\gamma}_{A,B}}}\left[\mu_{1}(\beta_{2}-1)e^{\mu_{1}\beta_{2}}\mathrm{Ei}(-\mu_{1}\beta_{2})+1\right]\label{eq:77}
\end{align}
where in \eqref{eq:76}, we use the transformation $x=\frac{v}{1-v}$,
$\mu_{1}=\frac{\bar{\gamma}_{A,R}+1/\rho}{\bar{\gamma}_{R,B}}$, $\beta_{2}=\frac{2^{2R}\bar{\gamma}_{A,R}+\bar{\gamma}_{A,B}}{(2^{2R}-1)\bar{\gamma}_{A,R}+\bar{\gamma}_{A,B}}$,
and the result in \eqref{eq:74}-\eqref{eq:75} is applied to obtain
\eqref{eq:77}.

\section{Proof of Proposition \ref{sec:coop-jamm-cj}}

\label{sec:deriv-eqref}

The outage probability for a given secrecy rate $R$ is given by
\begin{align}
\mathcal{P}_{out}^{CJ} & =\mathcal{P}\left(\frac{1+\rho\frac{|h_{R,B}|^{2}|h_{A,R}|^{2}}{|h_{R,B}|^{2}+\bar{\gamma}_{A,R}+\bar{\gamma}_{R,B}+\frac{1}{\rho}}}{1+\frac{|h_{A,R}|^{2}}{|h_{R,B}|^{2}+\frac{1}{\rho}}}<2^{2R}\right)\notag\\
 & =\mathcal{P}\left(\phi(|h_{R,B}|^{2})|h_{A,R}|^{2}<2^{2R}-1\right),\label{eq:78}
\end{align}
where
\begin{equation}
\phi(z)=\frac{\rho z}{z+\bar{\gamma}_{A,R}+\bar{\gamma}_{R,B}+\frac{1}{\rho}}-\frac{2^{2R}}{z+\frac{1}{\rho}}.\label{eq:79}
\end{equation}
Then, we have
\begin{align}
\mathcal{P}_{out}^{CJ} & =\int_{t}^{\infty}F_{|h_{A,R}|^{2}}\left(\frac{2^{2R}-1}{\phi(z)}\right)p_{|h_{R,B}|^{2}}(z)~dz+\int_{0}^{t}\left[1-F_{|h_{A,R}|^{2}}\left(\frac{2^{2R}-1}{\phi(z)}\right)\right]p_{|h_{R,B}|^{2}}(z)~dz\notag\label{eq:80}\\
 & =\int_{t}^{\infty}\left[1-e^{-\frac{2^{2R}-1}{\bar{\gamma}_{A,R}\phi(z)}}\right]p_{|h_{R,B}|^{2}}(z)~dz+\int_{0}^{t}p_{|h_{R,B}|^{2}}(z)~dz\notag\\
 & =1-\frac{1}{\bar{\gamma}_{R,B}}\int_{t}^{\infty}e^{-\frac{2^{2R}-1}{\bar{\gamma}_{A,R}\phi(z)}-\frac{z}{\bar{\gamma}_{R,B}}}~dz
\end{align}
where
\begin{equation}
\phi(z)\left\{ \begin{array}{ll}
\ge0, & z\ge t\\
<0, & 0\le z<t
\end{array}\right.\label{eq:81}
\end{equation}
and
\begin{equation}
t=\frac{(2^{2R}-1)+\sqrt{(2^{2R}-1)^{2}+\rho2^{2R+1}(\bar{\gamma}_{A,R}+\bar{\gamma}_{R,B}+1/\rho)}}{2\rho}.\label{eq:82}
\end{equation}

\section{Proof of Lemma \ref{sec:case-bargamma_r-b}}

\label{sec:proof-lemma-refs}

Define $X=\rho\frac{|h_{R,B}|^{2}|h_{A,R}|^{2}}{|h_{R,B}|^{2}+\bar{\gamma}_{A,R}+
\bar{\gamma}_{R,B}+\frac{1}{\rho}}$.  The outage probability of CJ with $\bar{\gamma}_{R,B}\rightarrow\infty$
is written as
\begin{align*}
 & \lim_{\bar{\gamma}_{R,B}\rightarrow\infty}\mathcal{P}_{out}^{CJ}=\lim_{\bar{\gamma}_{R,B}\rightarrow\infty}\mathcal{P}\left(\frac{1+X}{1+\frac{|h_{A,R}|^{2}}{|h_{R,B}|^{2}+\frac{1}{\rho}}}<2^{2R}\right)\\
 & =\lim_{\bar{\gamma}_{R,B}\rightarrow\infty}\left\{ \mathcal{P}\left(\left.\frac{1+X}{1+\frac{|h_{A,R}|^{2}}{|h_{R,B}|^{2}+\frac{1}{\rho}}}<2^{2R}\right||h_{R,B}|^{2}\ge\sqrt{\bar{\gamma}_{R,B}}\right)\mathcal{P}(|h_{R,B}|^{2}\ge\sqrt{\bar{\gamma}_{R,B}})\right.\\
 & \left.+\mathcal{P}\left(\left.\frac{1+X}{1+\frac{|h_{A,R}|^{2}}{|h_{R,B}|^{2}+\frac{1}{\rho}}}<2^{2R}\right||h_{R,B}|^{2}<\sqrt{\bar{\gamma}_{R,B}}\right)\mathcal{P}(|h_{R,B}|^{2}<\sqrt{\bar{\gamma}_{R,B}})\right\} .
\end{align*}
Since $|h_{R,B}|^{2}$ follows the exponential distribution, \emph{i.e.}
$|h_{R,B}|^{2}\sim\exp\left(\frac{1}{\bar{\gamma}_{R,B}}\right)$,
we have $\lim_{\bar{\gamma}_{R,B}\rightarrow\infty}\mathcal{P}(|h_{R,B}|^{2}\ge\sqrt{\bar{\gamma}_{R,B}})=\lim_{\bar{\gamma}_{R,B}\rightarrow\infty}e^{\frac{-1}{\sqrt{\bar{\gamma}_{R,B}}}}=1$
and $\lim_{\bar{\gamma}_{R,B}\rightarrow\infty}\mathcal{P}(|h_{R,B}|^{2}<\sqrt{\bar{\gamma}_{R,B}})=1-\lim_{\bar{\gamma}_{R,B}\rightarrow\infty}e^{\frac{-1}{\sqrt{\bar{\gamma}_{R,B}}}}=0$.
So the SOP of CJ is given by
\[
\lim_{\bar{\gamma}_{R,B}\rightarrow\infty}\mathcal{P}_{out}^{CJ}=\lim_{\bar{\gamma}_{R,B}\rightarrow\infty}\mathcal{P}\left(\left.\frac{1+X}{1+\frac{|h_{A,R}|^{2}}{|h_{R,B}|^{2}+\frac{1}{\rho}}}<2^{2R}\right||h_{R,B}|^{2}\ge\sqrt{\bar{\gamma}_{R,B}}\right).
\]
Under the condition that $|h_{R,B}|^{2}\ge\sqrt{\bar{\gamma}_{R,B}}$,
we have
\[
\frac{1+X}{1+\frac{|h_{A,R}|^{2}}{\sqrt{\bar{\gamma}_{R,B}}+\frac{1}{\rho}}}\le\frac{1+X}{1+\frac{|h_{A,R}|^{2}}{|h_{R,B}|^{2}+\frac{1}{\rho}}}\le1+X
\]
and consequently,
\begin{equation}
\mathcal{P}\left\{ \frac{1+X}{1+\frac{|h_{A,R}|^{2}}{\sqrt{\bar{\gamma}_{R,B}}+\frac{1}{\rho}}}<2^{2R}\right\} \ge\mathcal{P}\left\{ \left.\frac{1+X}{1+\frac{|h_{A,R}|^{2}}{|h_{R,B}|^{2}+\frac{1}{\rho}}}<2^{2R}\right||h_{R,B}|^{2}\ge\sqrt{\bar{\gamma}_{R,B}}\right\} \ge\mathcal{P}\left\{ 1+X<2^{2R}\right\} .\label{eq:43}
\end{equation}
Because $\mathcal{P}\left\{ \frac{1+X}{1+\frac{|h_{A,R}|^{2}}{\sqrt{\bar{\gamma}_{R,B}}+\frac{1}{\rho}}}<2^{2R}\right\} $
is continuous with respect to $\bar{\gamma}_{R,B}$, which can be
verified by showing that the function inside the probability integral
is differentiable (results in Appendix \ref{sec:deriv-eqref} can
be reused, but we skip the details here due to space constraints),
we have
\begin{equation}
\lim_{\bar{\gamma}_{R,B}\rightarrow\infty}\mathcal{P}\left\{ \frac{1+X}{1+\frac{|h_{A,R}|^{2}}{\sqrt{\bar{\gamma}_{R,B}}+\frac{1}{\rho}}}<2^{2R}\right\} =\lim_{\bar{\gamma}_{R,B}\rightarrow\infty}\mathcal{P}\left\{ 1+X<2^{2R}\right\} .\label{eq:38}
\end{equation}
Combining \eqref{eq:43} and \eqref{eq:38}, the conclusion in Lemma
\ref{sec:case-bargamma_r-b} can be inferred and the proof is completed.

%========================================================================
%========================================================================
{\linespread{1.30}
 \bibliographystyle{IEEEtran}
\bibliography{IEEEabrv,mybibfile}
 }
%\newpage

\end{document}